\def\A{{\mathcal{A}}}
\def\X{{\mathcal{X}}}
\def\E{{\mathbb{E}}}
\def\C{{\mathcal{C}}}
\def\I{{\mathcal{I}}}
\def\T{{\mathcal{T}}}
\def\reals{{\mathbb{R}}}
\newtheorem{theorem}{Theorem}
\newtheorem{lemma}[theorem]{Lemma}
\newtheorem{proposition}[theorem]{Proposition}
\def\reals{{\mathbb R}}
\newtheorem{conjecture}{Conjecture}
\theoremstyle{definition}
\theoremstyle{remark}
\newcommand{\horrule}[1]{\rule{\linewidth}{#1}} 
\title{	
	\normalfont \normalsize
	\textsc{BEN- GURION UNIVERSITY OF THE NEGEV} \\ [25pt] 
	\textsc{THE FACULTY OF NATURAL SCIENCES}\\[25pt]
	\textsc{DEPARTMENT OF COMPUTER SCIENCE}\\[25pt]
	\horrule{0.5pt} \\[0.4cm] 
	\huge A Crossing Lemma for Families of Jordan Curves with a Bounded Intersection Number \\  
	\horrule{2pt} \\[0.5cm] 
\date{}
}
\begin{document}
		\title{	
		A Crossing Lemma for Families of Jordan Curves with a Bounded Intersection Number
	}
\author{Maya Bechler-Speicher\\ \\Ben-Gurion University}

\date{}

	\maketitle

		\begin{abstract}
 A family of closed simple (i.e., Jordan) curves is {\it $m$-intersecting} if any pair of its curves have at most $m$ points of common intersection. We say that a pair of such curves {\it touch} if they intersect at a single point of common tangency.
In this work we show that any $m$-intersecting family of $n$ Jordan curves in general position in the plane contains $O\left(n^{2-\frac{1}{3m+15}}\right)$ touching pairs.\footnote{A family of Jordan curves is in general position if no three of its curves pass through the same point, and no two of them overlap. The constant of proportionality with the $O(\cdot)$-notation may depend on $m$.}

Furthermore, we use the string separator theorem of Fox and Pach \cite{FP10} in order to establish the following Crossing Lemma for contact graphs of Jordan curves: Let $\Gamma$ be an $m$-intersecting family of closed Jordan curves in general position in the plane with exactly $T=\Omega(n)$ touching pairs of curves,  then the curves of $\Gamma$ determine $\Omega\left(T\cdot\left(\frac{T}{n}\right)^{\frac{1}{9m+45}}\right)$ intersection points.

This extends the similar bounds that were previously established by Salazar for the special case of pairwise intersecting (and $m$-intersecting) curves. Specializing to the case at hand, this substantially improves the bounds that were recently derived by Pach, Rubin and Tardos for arbitrary families of Jordan curves.
	\end{abstract}
		
		\clearpage
		\maketitle

		\section{Introduction and Main Results}
\paragraph{Notation: intersection patterns of  closed Jordan curves and Jordan arcs.} 
A {\it Jordan curve} $\gamma$ is a simple  (i.e., non-self-intersecting) closed curve in the Euclidean plane; by the fundamental Jordan Curve Theorem, its complement $\reals^2\setminus \gamma$ consists of two simply connected planar regions. If a pair of Jordan curves have precisely one intersection point then they are tangent at this point, in which case we say 
that the curves {\it touch} and refer to their intersection as a $\mathit{touching  }$  $\mathit{ point}$; see Figure \ref{Fig:TouchingCrossing}.

\begin{figure}[htbp]
\begin{center}
\includegraphics[scale=0.4]{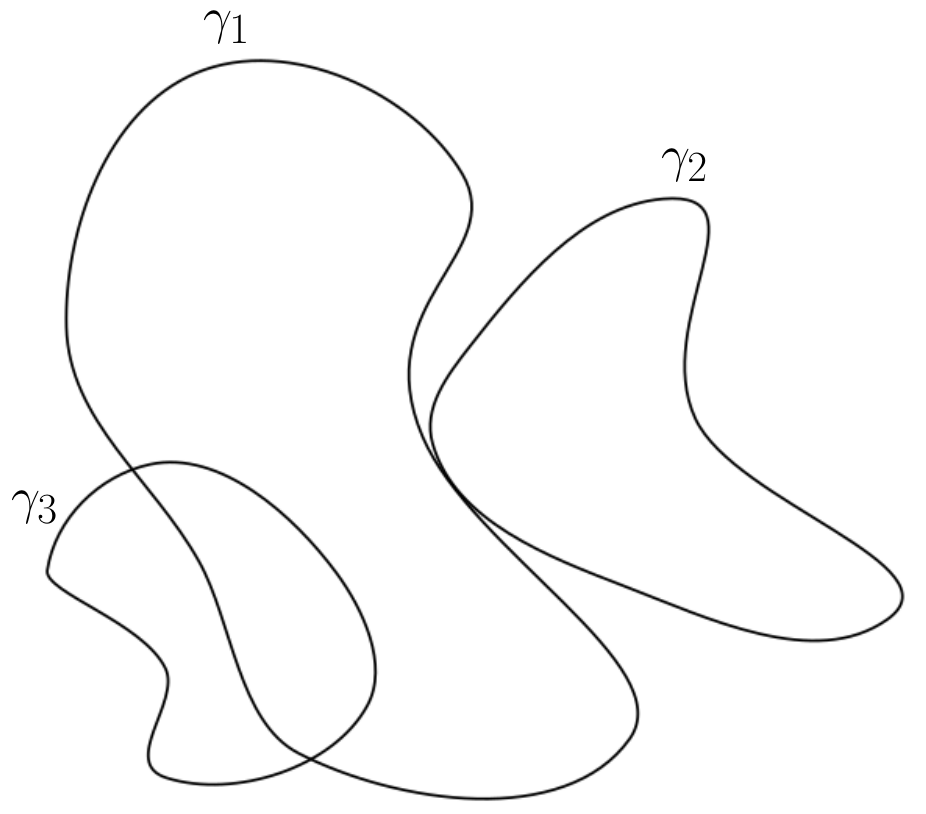}\hspace{2cm} \includegraphics[scale=0.4]{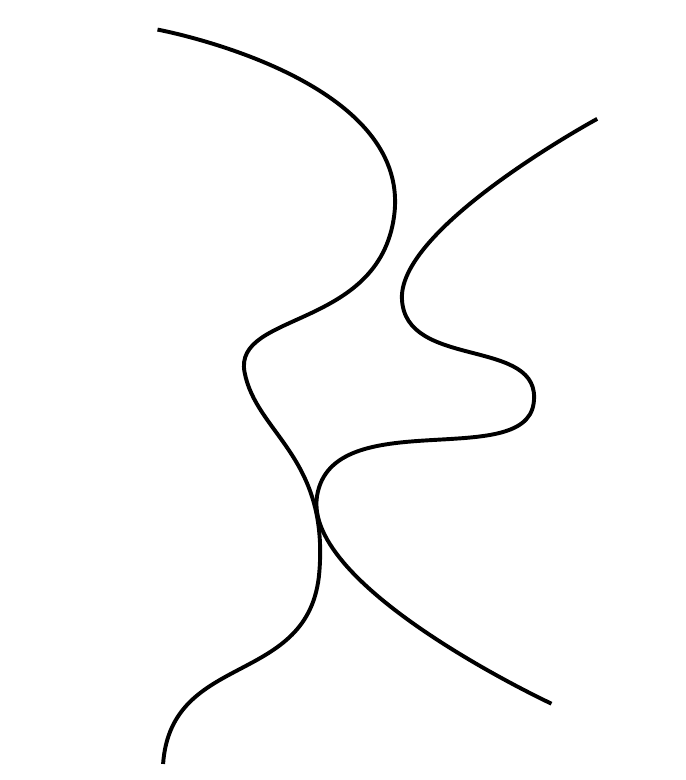}
\caption{\small  Left: The pair of Jordan curves $\gamma_1$ and $\gamma_2$ touch at a single point, whereas the pair $\gamma_1$ and $\gamma_3$ cross without touching. Right: A touching pair of Jordan arcs, whose single intersection point is their point of common tangency.}
\label{Fig:TouchingCrossing}
\end{center}
\end{figure}

For the sake of uniformity, most of our results in the sequel are established for the more general case of {\it Jordan arcs} -- simple connected (and compact) arcs in the plane. In particular, every Jordan curve is a Jordan arc but not vice versa.
To this end, we say that a pair of Jordan arcs {\it touch} if they intersect at a single point of common tangency.\footnote{In contrast with the case of closed Jordan curves, it is possible that a pair of Jordan arcs have a single intersection point at which they cross transversally.} See Figure \ref{Fig:TouchingCrossing} (right).

A family $\Gamma$ of Jordan arcs is in {\it general position} if no three of its curves pass through the same point, no two of them overlap, and none of them passes through an endpoint of another arc.
We say that such a family $\Gamma$ is $m$-intersecting if any two curves in $\Gamma$ intersect in at most $m$ points.

\paragraph{Intersection and contact graphs.} Let $\Gamma$ be a family of Jordan arcs. The {\it intersection}, or the {\it string} graph $\I(\Gamma)$ of $\Gamma$  defined over the vertex set $\Gamma$, describes pairwise intersections among the curves of $\Gamma$: every vertex in $\I(\Gamma)$ corresponds to an arc of $\Gamma$, and a pair of such vertices are connected by an edge in $\C(\Gamma)$ if and only if their respective arcs intersect; see Figure \ref{Fig:StringContact}.
Similarly, the contact (or touching) graph $\C(\Gamma)$ of $\Gamma$ is defined similarly but describes all pairwise touchings amongst the curves of $\Gamma$.

\begin{figure}[htbp]
\begin{center}
\includegraphics[scale=0.6]{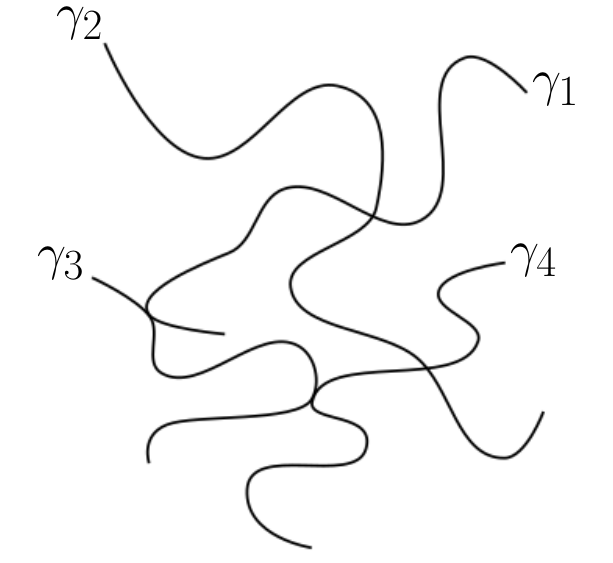}\hspace{1cm}	\includegraphics[scale=0.7]{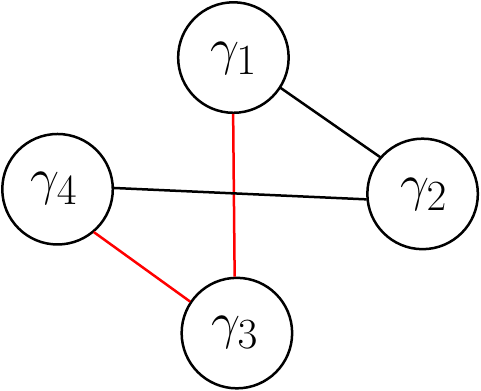}
\caption{\small  A family of Jordan arcs (left) with its intersection graph $\I(\Gamma)$ (right). The red edges represent touchings and belong to the contact graph $\C(\Gamma)$.}
\label{Fig:StringContact}
\end{center}
\end{figure}
		
		The study of string and contact graphs of Jordan curves and Jordan arcs in the plane is a recurring theme in combinatorial and computational geometry \cite{PseudoCircles,AgS05,ArS02,CEGSW90,PRT15,
PRT16,PRT18,PST12,Sa99,TT98} due to its numerous applications which range from long standing Erd\H{o}s conjectures in plane geometry \cite{De98,Er46} to robotic motion planning \cite{KLPS86,ShA95}. For example, a famous open problem due to Erd\H{o}s \cite{Er46} concerns the largest possible number of unit distances that can simultaneously arise among $n$ points in the plane can be reformulated as follows: 
		
\smallskip		
		{\it What is the maximum possible number of touchings can be determined by a family of $n$ unit circles in the plane?}
	
	\medskip
It is immediate to check that the contact graph of any family of Jordan curves is planar;
moreover, a celebrated and very deep result due to Koebe \cite{Koe36}, Andreev \cite{An70} and Thurston \cite{Thu97} states that any planar graph can be realized as (a subset of) a contact graph of a certain family of discs.

\paragraph{A Crossing Lemma for Jordan curves?} If the graph $G=(V,E)$ is not planar, the cornerstone Crossing Lemma in topological graph theory, due to Ajtai {\it et al.} \cite{ACNS} and Leighton \cite{L}, provides a lower bound of $\Omega(|E|^3/|V|^2)$ on the number of edge intersections in any planar embedding of the graph. In particular, if the number of edges $|E|$ exceeds the number $|V|$ of vertices, then the number of edge intersections is $\omega(|E|)$. It is natural to ask if a comparable relation exists for contact graphs of Jordan curves?

 The answer is intimately related to the following long standing conjecture by Richter and Thomassen \cite{RiT95}: Any family of $n$ pairwise intersecting Jordan curves in general position determines at least $n^2(1-o(1))$ intersection points.
The conjecture was recently confirmed by Pach, Rubin and Tardos \cite{PRT16,PRT18}, who established the following loose analogue Crossing Lemma for contact graphs: 

\begin{theorem}[Pach-Rubin-Tardos, 2018]\label{Thm:PRT}

(i) Let $\Gamma$ be a family of $n$ pairwise intersecting closed Jordan curves in general position in the plane that determine $T$ touching pairs. Then the curves of $\Gamma$ determine $\omega(T)$ intersection points. 

(ii) There is a constant $c_{1}>0$ so that the following property holds:
 Let $\Gamma$ be a family of $n$ pairwise intersecting closed Jordan curves in general position in the plane with at least $T\geq n$ touching pairs, then $\Gamma$ determines $\Omega\left(T\log\log^{c_{1}}\left(\frac{T}{n}\right)\right)$ intersection points.
\end{theorem}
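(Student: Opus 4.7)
My first observation is that in the pairwise-intersecting setting there is a cheap baseline: any non-touching pair of \emph{closed} Jordan curves in general position must cross transversally, and a parity argument (each closed Jordan curve bounds a disk, so transversal intersections between two such curves come in even numbers) forces such a pair to share at least two intersection points. Summing over pairs gives
\[
I \;\geq\; T + 2\bigl(\tbinom{n}{2}-T\bigr) \;=\; n(n-1)-T.
\]
This already delivers part~(i) and a linear form of part~(ii) whenever $T = o(n^{2})$, so the whole difficulty is concentrated in the dense regime $T = \Theta(n^{2})$.

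In that dense regime my plan is to set up a recursive divide-and-conquer argument based on the Fox--Pach string separator theorem. Let $f(n,T)$ denote the minimum number of intersection points in a pairwise-intersecting family of $n$ closed Jordan curves with $T$ touching pairs. Applying the separator theorem not to the intersection graph of $\Gamma$ (which is $K_{n}$ and therefore useless), but to an auxiliary string system --- for instance, the arcs obtained by subdividing each curve at each of its intersection points --- should produce a balanced partition $\Gamma = \Gamma_{1}\cup\Gamma_{2}\cup S$ with $|\Gamma_{i}|\leq 3n/4$ and a separator whose size is controlled by $I$ (of order $O(\sqrt{I\log n})$) rather than by $\binom{n}{2}$. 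The touching count splits as $T = T_{1}+T_{2}+T_{S}$ with $T_{S}\leq n\cdot |S|$, and each $\Gamma_{i}$ remains pairwise-intersecting, yielding an inequality of the shape $f(n,T) \geq f(n_{1},T_{1}) + f(n_{2},T_{2})$. Iterating this inequality, combined with the baseline of the previous paragraph, converts each recursion level into one further improvement over the trivial bound $I\geq T$; to obtain the iterated-logarithm factor of part~(ii) one would iterate $\Theta(\log\log(T/n))$ times, balancing $|S|$ against $T_{S}$ at every level.

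The main obstacle I anticipate is twofold. First, the Fox--Pach separator is sized by the number of intersecting pairs in the underlying string graph, and the reduction that makes its size depend on $I$ (rather than on $\binom{n}{2}$, which gives no information) is the technical crux; it probably requires passing to a subdivided arrangement, or to a carefully chosen bipartite incidence structure built from arcs bounded by consecutive touchings along each curve. Second, the recursion must be iterated $\Theta(\log\log(T/n))$ times with $T/n$ decreasing only mildly from level to level, so careful bookkeeping is needed to ensure that the surviving subfamilies remain pairwise-intersecting and still carry enough touchings to feed the next round. The exact constant $c_{1}$ in the exponent of $\log\log^{c_{1}}(T/n)$ is, I expect, determined precisely by how efficiently this bookkeeping can be carried out.
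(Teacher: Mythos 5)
This theorem is quoted from Pach, Rubin and Tardos \cite{PRT16,PRT18}; the present paper does not prove it, it only cites it, so there is no in-paper argument to compare against. Judged on its own merits, your plan has two concrete gaps. The baseline $I \geq n(n-1) - T$ is correct (and one does not even need parity here: a pair of closed curves with exactly one intersection point is, by definition, a touching pair, so any non-touching, pairwise-intersecting pair has at least two intersection points). But, as you yourself note, it only gives $I = \omega(T)$ when $T = o(n^{2})$; in the dense regime $T = \Theta(n^{2})$ it stops at $I = \Omega(T)$, and that regime is exactly the Richter--Thomassen conjecture, which is the main content of \cite{PRT15,PRT16}. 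Your recursion is meant to cover this case, but it bottoms out on the same baseline, so it cannot manufacture the needed $\omega(T)$ gain there without substantial additional geometric input.

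The second gap is in the separator step. The specific auxiliary string system you propose --- subdividing each curve at all of its intersection points --- makes the Fox--Pach separator vacuous: after cutting at every intersection point the resulting arcs are pairwise disjoint, so the string graph on them has no edges at all. The reduction that actually makes the separator size depend on $I$ (and is what this paper uses in its own proof of Theorem~\ref{Thm:Main2}) is to cut each curve into sub-arcs of bounded degree roughly $d = I/n$ in the intersection graph, not at the intersection points themselves; the pieces then still intersect, the number of intersecting pairs is $O(nd)$, and the separator has size $O(\sqrt{nd})$. Finally, the $\log\log^{c_{1}}(T/n)$ factor in part~(ii) is extracted in \cite{PRT18} through a multi-level charging and sampling argument with genuine geometric content at every level; ``balancing $|S|$ against $T_{S}$'' is bookkeeping, and your sketch does not indicate where the per-level quantitative improvement over $I \geq T$ would come from.
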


Specializing to the $m$-intersecting families $\Gamma$ of Jordan curves, the first part of Theorem \ref{Thm:PRT} implies that the number $T$ of the touching pairs is only $o(n^2)$. 
In 1999 Salazar \cite{Sa99} confirmed the Richter-Thomassen Conjecture for $m$-intersecting families of Jordan curves. To that end, he used K\H{o}v\'{a}ri-S\'{o}s-Tur\'{a}n Theorem \cite{KST54} from extremal graph theory in order to establish an even stronger variant of Theorem \ref{Thm:PRT}.

\begin{theorem}[Salazar, 1995]\label{Thm:Salazar}
Let $m>1$ be an integer. Then there is $C_{Sal}>0$ with the following property: Any $m$-intersecting family of $n$ pairwise intersecting (closed) Jordan curves in general position in the plane determines at most $\displaystyle O\left(n^{2-\frac{C_{Sal}}{m}}\right)$ touchings.\footnote{In the sequel, the constant of proportionality within the $O\left(\cdot \right)$ notation depends on the fixed constant parameter $m$.}
\end{theorem}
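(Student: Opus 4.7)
The plan is to apply the K\H{o}v\'{a}ri--S\'{o}s--Tur\'{a}n theorem to the contact graph $\C(\Gamma)$, combined with a topological rigidity argument that rules out a bipartite tangency pattern $K_{s,t}$ in which $s=\Theta(m)$ and $t$ is a constant depending only on $m$. Once such a $K_{s,t}$ is forbidden, the K\H{o}v\'{a}ri--S\'{o}s--Tur\'{a}n bound $\mathrm{ex}(n;K_{s,t})=O\bigl(n^{2-1/s}\bigr)=O\bigl(n^{2-\Theta(1/m)}\bigr)$ immediately yields the theorem.

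First I would assume, toward contradiction, that $\Gamma$ has more than $C\cdot n^{2-1/s}$ touchings, with $s=s(m)$ to be fixed and $C=C(m,t)$ a large constant. K\H{o}v\'{a}ri--S\'{o}s--Tur\'{a}n then supplies curves $A_1,\dots,A_s\in\Gamma$ and $B_1,\dots,B_t\in\Gamma$ such that every $A_i$ touches every $B_j$. The central topological input is the observation that since $A_i\cap B_j=\{p_{ij}\}$ and $B_j\setminus\{p_{ij}\}$ is connected and disjoint from the Jordan curve $A_i$, the whole curve $B_j$ must lie in the closure of a single side of $A_i$. Each $B_j$ therefore carries a side pattern in $\{+,-\}^s$, and by pigeonhole at least $t/2^s$ of the $B_j$'s share a common pattern, hence all lie in the closure of a single face $F$ of the arrangement $\A(A_1,\dots,A_s)$.

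Second, the boundary of $F$ decomposes into $O(ms^2)$ arcs, each a sub-arc of some $A_i$ between consecutive crossings of the $A$'s. A further pigeonhole over this arc set lets me assume that all surviving $B_j$'s are tangent to each $A_i$ on the same specific boundary arc $\beta_i\subset\partial F$. Running the Erd\H{o}s--Szekeres theorem once per index $i=1,\dots,s$ refines this to a sub-collection whose tangency points are simultaneously monotonically ordered along every $\beta_i$. Taking $t=t(m)$ sufficiently large at the outset ensures that this final monotone nested sub-collection contains at least any prescribed constant number of $B_j$'s.

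The main obstacle, and the combinatorial heart of the proof, is to derive a contradiction from such a monotone nested bipartite tangency configuration when $s$ exceeds $m$. The intuition is that any two curves in the nested sub-collection must intersect by the pairwise intersection hypothesis, and the consistent monotone nesting along each $\beta_i$, combined with confinement to $F$ and tangency from a common side of every $\beta_i$, forces at least one additional transverse crossing between the two curves in a small neighborhood of each $\beta_i$; accumulating these $s$ unavoidable crossings then contradicts the $m$-intersecting hypothesis once $s\geq m+1$. Turning this topological picture into a rigorous planar argument --- in particular, ruling out degenerate configurations in which the two curves wind around each tangency arc without accumulating the expected crossings --- is the hard part, and the one step that genuinely uses both the Jordan structure of the curves and the pairwise intersection hypothesis. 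Once this is in place, unwinding the successive pigeonholes and appealing to the K\H{o}v\'{a}ri--S\'{o}s--Tur\'{a}n threshold yields the touching count $O\bigl(n^{2-1/s}\bigr)$ with $s=\Theta(m)$, which is the claimed exponent $2-C_{Sal}/m$.
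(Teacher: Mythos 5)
Your high-level plan---forbid a bipartite pattern $K_{s,t}$ with $s=\Theta(m)$ in the contact graph $\C(\Gamma)$ and then invoke K\H{o}v\'{a}ri--S\'{o}s--Tur\'{a}n---is exactly the architecture that the paper attributes to Salazar, and it is also the skeleton of the paper's generalized proof of Theorem~\ref{Thm:Main1} (via Lemma~\ref{Lemma:NoBiClique} and Theorem~\ref{KST}). The confine-to-a-face step and the pigeonhole on boundary arcs are the right moves as well; in the paper they take the form of the circular-sequence map $\sigma(\lambda)$. But the proposal breaks down at the very step you identify as the crux, and it also mislocates the role of the pairwise-intersection hypothesis.

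The claim that the monotone nesting ``forces at least one additional transverse crossing between the two curves in a small neighborhood of each $\beta_i$'' is not correct: two Jordan curves tangent to $\beta_i$ at distinct points, from the same side, can remain disjoint near $\beta_i$, so no crossing is forced locally, and the ``winding around the tangency arc'' picture you are worried about is a symptom of aiming at the wrong statement. The crossings are global. If the tangency points of $B_1,B_2$ occur on $\partial F$ in the interleaved circular order $p_{1,k},p_{2,k},p_{1,k+1},p_{2,k+1},\ldots$, then the sub-arc of $B_1$ from $\beta_k$ to $\beta_{k+1}$ and the sub-arc of $B_2$ from $\beta_k$ to $\beta_{k+1}$ have endpoints that separate one another on $\partial F$; since both sub-arcs lie inside the (simply connected) face $F$, the Jordan Curve Theorem forces them to cross inside $F$, and these crossings are pairwise distinct as $k$ varies, giving $s$ crossings. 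This is exactly the alt-edge/hat-edge bookkeeping of Proposition~\ref{Prop:CircularUnique}, which moreover handles the non-monotone (hat) patterns directly, so the Erd\H{o}s--Szekeres reduction is dispensable while the actual content of the step is absent from your sketch. Relatedly, the pairwise-intersection hypothesis is not needed to conclude that $B_1$ and $B_2$ intersect---the interleaving argument proves that unconditionally. Its real use is earlier: it guarantees that $A_1\cup\cdots\cup A_s$ is connected, hence (after compactifying to $\S^2$) that every face of $\A(A_1,\dots,A_s)$ is simply connected, so that $\partial F$ is a single circular sequence of arcs at all; without that the circular-sequence machinery does not apply, and this is precisely the obstruction the paper works around by anchoring arcs to two ground curves in Lemma~\ref{Lemma:GroundCurves}. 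A smaller fix: a common side pattern in $\{+,-\}^s$ selects a union of faces, not one face, so an extra pigeonhole over the $O(ms^2)$ faces is needed before a single $F$ can be fixed.
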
 

If the family $\Gamma$ in Theorem \ref{Thm:Salazar} is $2$-intersecting, it is called a family of {\it pseudo-circles}. Agarwal {\it et al.} \cite{PseudoCircles} showed that any such pairwise-intersecting family of pseudo-discs in general position determines only $O(n)$ touchings.

Unfortunately, the analysis of Salazar (and also of Agarwal {\it et al.}) critically relies on the assumption that the Jordan curves are pairwise intersecting and, therefore, falls short of providing a proper Crossing Lemma for $m$-intersecting families of Jordan curves.

Note that {\it any} family $\Gamma$ of non-overlapping bounded-degree {\it algebraic} curves (i.e., connected components of $1$-dimensional algebraic varieties within $\reals^2$) is in particular $m$-intersecting, where the constant $m$ depends on the maximum degree of the curves within $\Gamma$.
A recent result by Ellenberg, Solymosi and Zahl \cite{ESZ16} implies that such  a family of $n$ curves determines $O\left(n^{3/2}\right)$ pairwise tangencies, where only the constant proportionality depends on the maximum degree. Unfortunately, their bound essentially relies on powerful tools from algebraic geometry and does not extend to more general instances of $m$-intersecting families of Jordan curves.

\paragraph{Our contribution.} In this thesis we extend Theorem \ref{Thm:Salazar} to general $m$-intersecting families of Jordan arcs in general position.

\begin{theorem}	\label{Thm:Main1}
Let $m\geq 1$ be an integer then any $m$-intersecting family of $n$ Jordan arcs in general position in the plane determines $O\left(n^{2-\frac{1}{3m+15}}\right)$ touchings; here the constant of proportionality may depend on $m$.
	\end{theorem}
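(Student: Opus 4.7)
The plan is to bootstrap Salazar's theorem (Theorem \ref{Thm:Salazar}) for pairwise intersecting families to the general $m$-intersecting setting via a divide-and-conquer recursion driven by the Fox--Pach string separator theorem. I would proceed by strong induction on $n$, setting $f_m(n)$ to be the maximum number of touchings realizable by an $m$-intersecting family of $n$ arcs in general position, with the induction hypothesis $f_m(n)\le C_m\,n^{\,2-1/(3m+15)}$ for a suitable $C_m$.

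Given $\Gamma$ attaining or nearly attaining $f_m(n)$, I would examine its intersection graph $\I(\Gamma)$, noting that $\C(\Gamma)\subseteq \I(\Gamma)$ and that every edge of $\I(\Gamma)$ accounts for at most $m$ crossings. A case split on the number of edges $E$ of $\I(\Gamma)$ relative to a threshold $n^{2-\alpha}$ drives the argument: in the sparse regime ($E\le n^{2-\alpha}$) the touchings are trivially bounded by $E$, whereas in the dense regime the Fox--Pach separator theorem, applied with the crossing count $\le mE$, yields a balanced vertex separator $S$ of $\I(\Gamma)$ whose size is a sublinear power of $n$. One then recurses on the two pieces of size at most $\tfrac{2}{3}n$, obtaining a recursion of the shape
\[
f_m(n)\;\le\;2 f_m\!\bigl(\tfrac{2}{3}n\bigr)\;+\;n\cdot|S|,
\]
and solves it by picking the exponent $\alpha$ that equalizes the two cases.

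Within the dense regime, the multiplicative gain per recursive level is controlled by Salazar's theorem: whenever the recursion bottoms out on a subfamily that is almost pairwise intersecting, Theorem \ref{Thm:Salazar} furnishes a bound of the form $|\Gamma'|^{2 - C_{Sal}/m}$ on the touchings inside such a subfamily, which seeds the induction. The exponent $1/(3m+15)$ arises from balancing three contributions in the master theorem for the recursion, namely the $C_{Sal}/m$ exponent saved by Salazar on pairwise intersecting pieces, the polynomial loss per recursion level from the separator size, and the logarithmic depth of the recursion; the additive constant $15$ in the denominator reflects the polynomial overhead in the separator and the base case, while the factor $3m$ comes from the repeated halvings of the exponent.

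The main obstacle is the coupling between the two ingredients. Salazar's bound requires the pairwise intersection hypothesis, but Fox--Pach's separator does not automatically produce pairwise intersecting pieces; conversely, the separator size degrades once $\I(\Gamma)$ is close to a clique. The delicate step will therefore be arranging the case split so that in the dense regime the separator is small enough to make the recursion contract, while in the near-clique regime the pairwise intersection structure becomes dense enough that Salazar takes over directly. Getting the constants to land on exactly $\tfrac{1}{3m+15}$ will be a careful balancing of the density threshold $\alpha$, the separator exponent, and Salazar's exponent $C_{Sal}/m$, and I expect this quantitative optimization, rather than any single conceptual step, to be the principal technical hurdle.
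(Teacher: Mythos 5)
There is a genuine gap, and it lies in the very engine of your recursion. The Fox--Pach separator of $\I(\Gamma)$ has size $O(\sqrt{mE})$, where $E$ is the number of \emph{edges of the string graph}, not the number of touchings. The curves in the separator $S$ can each touch up to $n-1$ other curves, so the additive term in your recursion is in the worst case $n|S|=O(n\sqrt{mE})$. When $E=\Theta(n^2)$ -- the regime you call ``near-clique'' -- this term is $\Theta(\sqrt{m}\,n^2)$, no better than the trivial bound. You acknowledge this and propose that ``Salazar takes over directly'' in the near-clique regime, but Salazar's Theorem~\ref{Thm:Salazar} requires the family to be \emph{pairwise intersecting}, a geometric condition, not merely that $\I(\Gamma)$ be dense. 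A string graph that is $(1-\varepsilon)$-dense can still have $\Theta(n)$ pairs of disjoint arcs, and Salazar's bi-clique argument breaks for those pairs: his proof hinges on being able to cut every touching arc along the two ``ground curves'' so that the remaining pieces all lie in one cell, which is only automatic when every arc meets every other arc. There is no value of the threshold $\alpha$ that simultaneously makes the sparse regime trivially bounded, the moderately dense regime separator-contractive, and the near-clique regime reducible to Salazar -- the middle range where $n^{2-2/(3m+15)}\ll E\ll n^2$ is exactly where both tools fail.

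The paper's actual proof sidesteps this by manufacturing the pairwise-intersecting structure locally rather than hoping the recursion produces it. Lemma~\ref{Lemma:GroundCurves} uses a random pair $(\gamma_1,\gamma_2)$ as ground curves and shows, by a ``rich touching'' counting argument, that with positive probability $\Omega\bigl(T/n^{2/(3m+15)}\bigr)$ touchings involve one arc touching $\gamma_1$ and one touching $\gamma_2$, all concentrated in a single cell $\Delta$ of $\reals^2\setminus(\gamma_1\cup\gamma_2)$. After clipping each such arc to $\Delta$, one obtains a family of Jordan arcs whose contact graph is shown (Lemma~\ref{Lemma:NoBiClique}, via Proposition~\ref{Prop:CircularUnique}) to be $K_{m+5,l(m)}$-free; then K\H{o}v\'{a}ri--S\'{o}s--Tur\'{a}n finishes the bound directly, with no recursion. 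The Fox--Pach separator does appear in the paper, but only in the proof of Theorem~\ref{Thm:Main2}, where it is used to decompose $\Gamma$ into pieces of controlled size so that Theorem~\ref{Thm:Main1} can be invoked on each piece -- not to bound touchings by induction. So your proposal is not a variant of the paper's argument; it substitutes a recursion whose base case is unreachable for the paper's random-grounding step, and that substitution does not close.
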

	
Note that Theorem \ref{Thm:Main1} can be viewed as a Crossing Lemma for dense contact graphs of Jordan curves. We then employ the machinery of string separators \cite{FP10} in order to derive a more general Crossing Lemma for arbitrary $m$-intersecting families of Jordan arcs. 
	
\begin{theorem}	\label{Thm:Main2}
Let $m\geq 1$. Any $m$-intersecting family of $n$ Jordan arcs in general position with $T\geq n$ touching pairs determines\\ $\Omega \left(T\cdot\left(\frac{T}{n}\right)^{\frac{1}{9m+45}}\right)$ intersection points, where the constant of proportionality may depend on $m$.
	\end{theorem}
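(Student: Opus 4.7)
The plan is to derive Theorem~\ref{Thm:Main2} from Theorem~\ref{Thm:Main1} via a recursive divide-and-conquer scheme driven by the Fox--Pach string-separator theorem~\cite{FP10}, in the spirit of how a forbidden-subgraph Tur\'an bound produces a Pach--Spencer-type crossing lemma. In this analogy, Theorem~\ref{Thm:Main1} plays the r\^ole of an ``Euler-type'' density ceiling, capping the touching density of any $m$-intersecting subfamily, while Fox--Pach supplies the balanced separators that split the intersection graph of the arrangement into two roughly equal pieces.

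Concretely, given $\Gamma$ with $n$ arcs, $T$ touching pairs, and $X$ intersection points, I would first apply the Fox--Pach separator to the string graph $\I(\Gamma)$, which has at most $X$ edges. This produces a subfamily $\Gamma_0\subseteq\Gamma$ of size $|\Gamma_0|=O(\sqrt{X\log X})$ whose removal splits $\Gamma\setminus\Gamma_0$ into two subfamilies $\Gamma_1\sqcup\Gamma_2$, each of size at most $2n/3$, with no curve of $\Gamma_1$ intersecting any curve of $\Gamma_2$. In particular $X\ge X_1+X_2$, where $X_i$ counts intersections internal to $\Gamma_i$, and at most $|\Gamma_0|\cdot n=O(n\sqrt{X\log X})$ touching pairs are incident to $\Gamma_0$.

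I would then perform induction on $n$ to establish $X\ge c_m\,T\,(T/n)^{\alpha}$ with $\alpha=1/(9m+45)$. The inductive step applies the hypothesis to each side and recombines the two estimates via the H\"older-type inequality
\[
\frac{T_1^{1+\alpha}}{n_1^{\alpha}}+\frac{T_2^{1+\alpha}}{n_2^{\alpha}}\ \ge\ \frac{(T_1+T_2)^{1+\alpha}}{(n_1+n_2)^{\alpha}},
\]
which together with $n_1+n_2\le n$ and $T_1+T_2\ge T-O(n\sqrt{X\log X})$ recovers the target form for $\Gamma$ up to a controlled multiplicative loss. The ratio $\alpha=\beta/3$ between the crossing-lemma exponent and the deficit $\beta=1/(3m+15)$ of Theorem~\ref{Thm:Main1} is chosen so that this argument closes simultaneously against the $\sqrt{X}$ cost of the separator and the $n^{2-\beta}$ density ceiling.

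The principal obstacle lies in handling the ``thick-separator'' regime, where $|\Gamma_0|\cdot n$ exceeds half of $T$ and the naive separator estimate delivers only $X\gtrsim T^{2}/(n^{2}\log n)$, which can fall short of the target $T(T/n)^{\alpha}$ when $T\ll n^{2}$. Closing this gap requires a careful re-invocation of Theorem~\ref{Thm:Main1}---applied either to $\Gamma_0$ itself or to a suitable sub-arrangement---to show that the \emph{intersections} (not merely touches) incident to the separator are already numerous enough. This three-way balance among the Fox--Pach separator cost, the H\"older averaging across $\Gamma_1,\Gamma_2$, and the extremal density ceiling furnished by Theorem~\ref{Thm:Main1} is what pins down the specific value $1/(9m+45)$ of the exponent and absorbs the $\sqrt{\log X}$ overhead, and it is where the bulk of the technical bookkeeping of the proof is concentrated. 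A secondary but routine point is that the hereditary properties ``in general position'' and ``$m$-intersecting'' automatically descend to every subfamily used in the recursion, so that the inductive hypothesis may be applied freely throughout.
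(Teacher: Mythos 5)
Your high-level ingredients are the right ones---Fox--Pach separators driving a divide-and-conquer that ultimately bottoms out in Theorem~\ref{Thm:Main1}---but the combining strategy differs from the paper's, and there is a genuine gap in your argument that you flag but do not close. You propose a literal induction with a H\"older-type averaging inequality across the two halves of a single separation, with the separator cost $|\Gamma_0|\cdot n$ charged against $T$. As you yourself compute, this charging only works when $n\sqrt{X}\ll T$, and in the opposite (``thick-separator'') regime the fallback $X\gtrsim T^2/n^2$ is \emph{always} weaker than the target $T\cdot(T/n)^{\alpha}$ for $\alpha<1$ and $T\le\binom{n}{2}$, so Case~2 never closes on its own. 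Re-invoking Theorem~\ref{Thm:Main1} ``on $\Gamma_0$ itself'' cannot rescue this, since $|\Gamma_0|=O(\sqrt{X})$ is tiny and contributes nothing near the needed order. The missing idea is the paper's degree-regularization step: before applying any separators, each arc is cut into sub-arcs so that every sub-arc has string-graph degree at most $d=\lfloor X/n\rfloor$, which leaves $\T$ and $\X$ unchanged but ensures each arc removed to the separator destroys at most $d$ touchings. This is exactly what makes the separator loss controllable in \emph{every} regime.

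With that regularization in hand, the paper also avoids your recursive H\"older recombination entirely. It iterates the Fox--Pach separator until all pieces have size below a single threshold $M:=C\,n^2d^3/T^2$, bounds the total separator size by $|S|=O(n\sqrt{d/M})=O(T/d)$, notes that $S$ can therefore kill at most $|S|\,d=O(T)<T/2$ touchings (after tuning $C$), and then applies the pigeonhole principle: some terminal piece $\Gamma_i$ (of size $<M$) carries $\Omega(nd^3/T)$ touchings. One application of Theorem~\ref{Thm:Main1} to that single dense piece then gives $nd^3/T=O(M^{2-1/(3m+15)})$, and unwinding the definitions of $M$ and $f:=X/T$ yields $f=\Omega\left(\left(T/n\right)^{1/(9m+45)}\right)$. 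So there is no induction on $n$ and no H\"older inequality: the exponent $1/(9m+45)$ falls out of choosing the threshold $M$ so that the pigeonhole count and Theorem~\ref{Thm:Main1} match. Your instinct that the exponent is $\beta/3$ with $\beta=1/(3m+15)$ is correct, but the mechanism that forces the factor of $3$ is the cubic dependence on $d$ in $M$ rather than a balance in a recursive averaging argument.
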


Specializing to $m$-intersecting families of Jordan curves, the bounds in Theorems \ref{Thm:Main1} and \ref{Thm:Main2} constitute a substantial improvement of the lower bound in Theorem \ref{Thm:PRT}.

\paragraph{Organization and overview.} The thesis is organized as follows.
In Section \ref{Sec:Prelim} we introduce the basic notions from combinatorial geometry and extremal combinatorics that are used throughout the thesis, and state the key properties that lie at the heart of our analysis. In Section \ref{Sec:Main1} we establish Theorem \ref{Thm:Main1} through a careful combination of K\H{o}v\'{a}ri-S\'{o}s-Tur\'{a}n Theorem \cite{KST54} with a random sampling argument which is inspired by the analysis of Pach, Rubin and Tardos \cite{PRT18}. In Section \ref{Sec:Main2} we establish the general Crossing Lemma -- Theorem \ref{Thm:Main2}. Informally, this is achieved by decomposing the arrangement of the Jordan curves (cf. for the precise definition) into pieces which determine sufficiently dense intersection graphs, and then applying Theorem \ref{Thm:Main1} to each of these sub-instances.

\section{Preliminaries}\label{Sec:Prelim} 

\paragraph{Arrangements of Jordan curves and Jordan arcs in the plane.} 
Any family $\Gamma$ of closed Jordan arcs in general position yields a decomposition of $\reals^2$ which is called an {\it arrangement} $\A(\Gamma)$. The {\it faces}, or {\it cells}, of $\A(\Gamma)$ are maximal connected regions of $\reals^2\setminus\left(\bigcup_{\gamma\in \Gamma}\gamma\right)$; see Figure \ref{Fig:Arrangement}. The vertices of $\A(\Gamma)$ are intersection points amongst the arcs of $\Gamma$, or endpoints of the arcs.
For every face of $\A(\Gamma)$, every boundary component of it is composed of connected portions of the curves of $\Gamma$, which are called the {\it edges} of $\A(\Gamma)$. Notice that some of these edges may be adjacent to less than two vertices of $\A(\Gamma)$, in which case their closure coincides with a closed curve of $\Gamma$.

\begin{figure}[htbp]
\begin{center}
\includegraphics[scale=0.4]{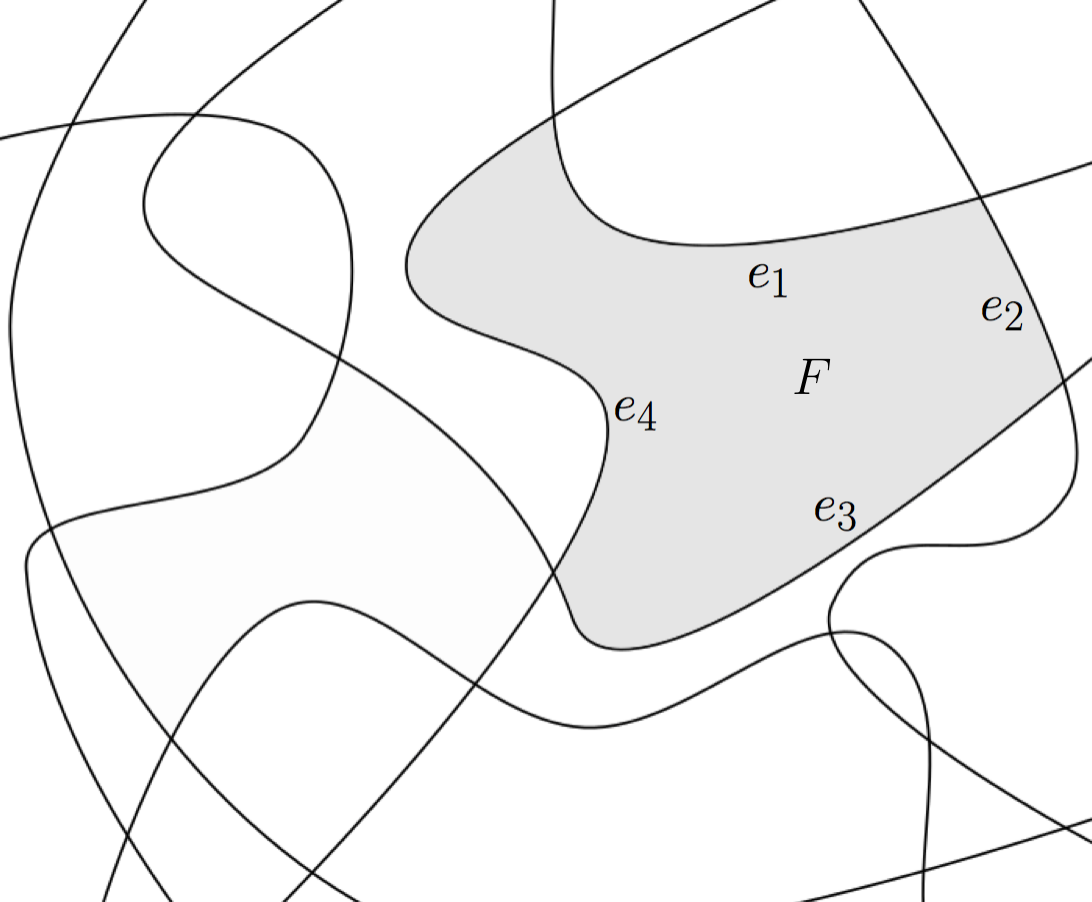}
\caption{\small An arrangement of Jordan curves. One of the faces $F$ is shaded. It is bounded by the $4$ edges $e_1,\ldots,e_4$.}
\label{Fig:Arrangement}
\end{center}
\end{figure}

If the family $\Gamma$ is $m$-intersecting, then both the number of vertices and the number of faces in $\A(\Gamma)$ are proportional to the number of edges in the intersection graph $\I(\Gamma)$ (where the constant of proportionality may depend on $m$), and their overall number is $O(n^2)$.
We refer the reader to a popular book \cite{ShA95} for further discussion of curve arrangements and their higher-dimensional analogues, and their numerous applications in computational geometry.

\paragraph{A Separator Theorem for intersection graphs.}
Given a graph $G=(V,E)$ of order $n=|V|$, a subset $V'\subset V$ of vertices is called a {\it separator} if each connected component of $G[V\setminus V']$ contains at most $2n/3$ of the vertices.

The celebrated Separator Theorem of Lipton and Tarjan \cite{LiTa79} states that any planar graph admits a separator of cardinality $O\left(\sqrt{n}\right)$.
Their result has been recently extended, by Fox and Pach, to non-dense string graphs.

\begin{theorem}[Fox-Pach, 2010 \cite{FP10}]\label{Thm:FoxPach}
Let $\Gamma$ be a family Jordan arcs in general position in the plane. Then the intersection graph $\I(\Gamma)$ admits a separator $S$ of cardinality $O\left(\sqrt{x}\right)$, where $x$ denotes the number of intersection points amongst the arcs of $\Gamma$. In particular, if the family $\Gamma$ is $m$-intersecting, for $m>1$, then we have $|S|=O\left(\sqrt{mE}\right)$; here $E$ denotes the number of edges in the intersection graph $\I(\Gamma)$. See Figure \ref{Fig:Separator}.
\end{theorem}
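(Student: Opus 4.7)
The plan is to reduce the problem to the classical planar separator theorem of Lipton and Tarjan, applied to the natural plane graph underlying the arrangement $\A(\Gamma)$.

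I would begin by considering the plane graph $G_{\A}$ whose vertices are the $x$ intersection points among the curves (together with the at most $2n$ endpoints of arcs), and whose edges are the maximal sub-arcs of the curves $\gamma \in \Gamma$ joining consecutive vertices along each curve. By general position, $G_{\A}$ is planar with $O(x+n)$ vertices, and every vertex lies on at most two curves; for the non-trivial regime we may assume $x \geq n$, as otherwise the intersection graph $\I(\Gamma)$ contains at most $n$ edges and the claimed separator is obtained directly.

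Next, I would assign each curve $\gamma \in \Gamma$ one unit of weight, distributed uniformly over its vertices in $G_{\A}$, so that the total vertex weight of $G_{\A}$ equals $n$. Applying the weighted planar separator theorem of Lipton-Tarjan to $G_{\A}$ produces a vertex set $V' \subseteq V(G_{\A})$ with $|V'| = O\bigl(\sqrt{|V(G_{\A})|}\bigr) = O(\sqrt{x})$ such that every connected component of $G_{\A} \setminus V'$ carries at most $2n/3$ units of weight. I would then take $S \subseteq \Gamma$ to be the family of curves passing through some vertex of $V'$; since at most two curves meet at each such vertex, $|S| \leq 2|V'| = O(\sqrt{x})$.

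It remains to verify that $S$ indeed separates $\I(\Gamma)$. Any curve $\gamma \in \Gamma \setminus S$ has no vertex in $V'$ and, being a connected path through its vertices in $G_{\A}$, lies entirely within a single connected component of $G_{\A} \setminus V'$; hence its unit of weight is concentrated in that one component. Two curves lying in distinct components of $G_{\A} \setminus V'$ can therefore share no intersection point, so the connected components of $\I(\Gamma) \setminus S$ refine those of $G_{\A} \setminus V'$, and the $2/3$-weighted balance translates directly into a balanced cardinality partition of $\Gamma \setminus S$. The refinement for the $m$-intersecting case is then immediate from the bound $x \leq mE$, which yields $\sqrt{x} \leq \sqrt{mE}$.

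The main obstacle I anticipate concerns the hypothesis of the weighted Lipton-Tarjan theorem that no single vertex carry more than a constant fraction of the total weight: a curve with very few vertices in $G_{\A}$ would place large mass on each of those vertices. I would address this by first setting aside the curves of exceptionally low degree in $\I(\Gamma)$ -- their contribution to any balanced partition is mild, and they can be appended to $S$ at a bounded additive cost per curve -- before applying the weighted separator argument to the high-degree sub-arrangement on which the weights are sufficiently well-spread.
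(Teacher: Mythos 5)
This theorem is not proved in the paper at all; it is quoted from Fox and Pach~\cite{FP10}, and the only comment the paper offers is a one-sentence sketch (``converting the arrangement $\A(\Gamma)$ into a planar graph and a careful application of a weighted variant of the Lipton--Tarjan Theorem''). Your reconstruction matches that sketch closely: passing to the arrangement graph $G_{\A}$, spreading one unit of weight per curve uniformly over its arrangement vertices, invoking weighted Lipton--Tarjan, and pulling the vertex separator $V'$ back to a curve set $S$ via the ``at most two curves per vertex'' observation is the right structure, and your translation from weight balance in $G_{\A}\setminus V'$ to cardinality balance in $\I(\Gamma)\setminus S$ is handled correctly, since any curve outside $S$ deposits its entire unit of weight inside a single component.

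Two points are off, though. First, the ``main obstacle'' in your final paragraph is a phantom: by general position every vertex of $G_{\A}$ lies on at most two curves, so its weight is at most $2$, i.e.\ at most a $2/n$-fraction of the total, and the weighted separator theorem applies without any preprocessing. Worse, the remedy you propose would actually break the bound: appending every low-$\I(\Gamma)$-degree curve to $S$ ``at a bounded additive cost per curve'' fails already for a family made of $n/2$ touching pairs and nothing else, where $x=n/2$, every curve has degree $1$, and your remedy would place $\Theta(n)\gg\sqrt{x}$ curves into $S$. Second, the dismissal of the regime $x<n$ (``the claimed separator is obtained directly'') is too hasty -- a graph on $n$ vertices with $O(n)$ edges need not admit an $O(\sqrt{x})$-separator in general (bounded-degree expanders have $O(n)$ edges and linear-size separators), and your arrangement-graph bound would only give $O(\sqrt{n})$, not $O(\sqrt{x})$. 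The correct fix is elementary: first discard every curve involved in no intersection, as these are isolated vertices of $\I(\Gamma)$ and hence already singleton components; the surviving family has at most $2x$ curves, so $|V(G_{\A})|=O(x)$ and the rest of your argument then delivers $|S|=O(\sqrt{x})$ as claimed.
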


\begin{figure}[htbp]
\begin{center}
\includegraphics[scale=0.3]{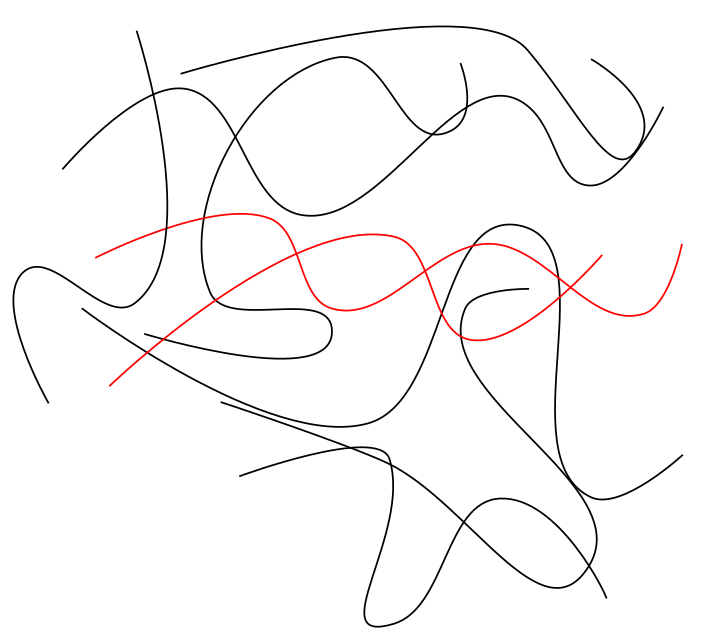}
\caption{\small  A $4$-intersecting family $\Gamma$ of Jordan arcs, including separator set of arcs (marked red).}
\label{Fig:Separator}
\end{center}
\end{figure}

Informally, the proof of Theorem \ref{Thm:FoxPach} proceeds by converting the arrangement $\A(\Gamma)$ of $\Gamma$ into a planar graph and a careful application of (a weighted variant of) the Lipton-Tarjan Theorem.

\paragraph{K\H{o}v\'{a}ri-S\'{o}s-Tur\'{a}n Theorem.} In his proof of Theorem \ref{Thm:Salazar}, Salazar used the following key result from extremal combinatorics.

\begin{theorem}[K\H{o}v\'{a}ri-S\'{o}s-Tur\'{a}n, 1954 \cite{KST54}] \label{KST}
Let $s,t$ be positive integer constants, and let  $G=(V,E)$ be a graph which does not contain a copy of the complete $s$-by-$t$ bipartite graph $K_{s,t}$. Then we have $|E|=O\left(n^{2-1/s}\right)$ edges, where the constant of proportionality depends on $t$.
\end{theorem}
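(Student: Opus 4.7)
The plan is to prove the Kővári--Sós--Turán bound via the classical double-counting argument. Assume without loss of generality that $s \leq t$, and that the average degree $\bar{d} = 2|E|/n$ is at least $s-1$ (otherwise $|E| = O(n)$ and the bound is trivial). The idea is to count, in two different ways, the number $N$ of pairs $(A, v)$ where $A \subseteq V$ is an $s$-element set, $v \in V$, and $A \subseteq N(v)$ (i.e.\ $v$ is adjacent to every vertex of $A$).

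For the upper bound on $N$, fix any $s$-subset $A \subseteq V$. If there were $t$ distinct vertices $v_1, \ldots, v_t$ all satisfying $A \subseteq N(v_i)$, then $\{v_1,\ldots,v_t\}$ together with $A$ would induce a copy of $K_{s,t}$, contradicting the hypothesis on $G$. Hence each $A$ contributes at most $t-1$ pairs, so
\[
N \;\leq\; (t-1)\binom{n}{s}.
\]
For the lower bound, each vertex $v$ contributes $\binom{d(v)}{s}$ to $N$, so $N = \sum_{v \in V} \binom{d(v)}{s}$. Since $x \mapsto \binom{x}{s}$ is convex on $[s-1,\infty)$, Jensen's inequality gives
\[
N \;\geq\; n \cdot \binom{2|E|/n}{s}.
\]

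Combining the two estimates yields $n \binom{2|E|/n}{s} \leq (t-1)\binom{n}{s}$. Using the standard estimates $\binom{x}{s} \geq (x-s+1)^s/s!$ and $\binom{n}{s} \leq n^s/s!$, this simplifies to $(2|E|/n - s + 1)^s \leq (t-1) n^s$, whence
\[
|E| \;\leq\; \tfrac{1}{2}\bigl((t-1)^{1/s} n^{2-1/s} + (s-1) n \bigr) \;=\; O\!\left(n^{2-1/s}\right),
\]
with the constant of proportionality depending only on $s$ and $t$.

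The argument is essentially self-contained and the main technical point is just the application of Jensen's inequality to the convex function $\binom{x}{s}$, together with the observation that the edge-case $\bar{d} < s-1$ is harmless. There is no genuine obstacle here; the only mild subtlety is ensuring that the passage from $\binom{2|E|/n}{s} \leq (t-1)\binom{n}{s}/n$ back to a clean polynomial bound on $|E|$ is done with the correct lower estimate on $\binom{x}{s}$ so as not to lose the exponent $1/s$.
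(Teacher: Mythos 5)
The paper states this result as a classical theorem from K\H{o}v\'{a}ri--S\'{o}s--Tur\'{a}n \cite{KST54} and cites it without proof, so there is no in-paper argument to compare against. Your reconstruction is the standard double-counting proof: count pairs $(A,v)$ with $A$ an $s$-subset of $N(v)$, bound this above by $(t-1)\binom{n}{s}$ using $K_{s,t}$-freeness and below by convexity of $\binom{x}{s}$ via Jensen, then extract the exponent. This is correct in substance. Two small remarks. First, from $n\binom{2|E|/n}{s}\le (t-1)\binom{n}{s}$ and the estimates you quote, the intermediate inequality should read $(2|E|/n - s+1)^s \le (t-1)\,n^{s-1}$, not $(t-1)\,n^s$; with the exponent $s$ on the right-hand side the final displayed bound on $|E|$ would not follow, whereas with $n^{s-1}$ your last line is exactly right, so this is just a typo. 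Second, when invoking Jensen you should extend $\binom{x}{s}$ by $0$ for $x<s-1$ (so the function is convex on all of $\reals$): your hypothesis that the \emph{average} degree exceeds $s-1$ does not control individual degrees $d(v)$, some of which may fall below $s-1$; the standard zero-extension convention handles this and makes the convexity step rigorous. Also, the word ``induce'' should be ``contain as a subgraph,'' since $K_{s,t}$-freeness in this theorem is a subgraph (not induced-subgraph) condition; your argument is using it in the correct (subgraph) sense anyway.
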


To establish Theorem \ref{Thm:Salazar}, Salazar uses the $m$-intersecting property of $\Gamma$ to show that the contact graph $\C(\Gamma)$ cannot contain a copy of $K_{2m,l}$ for some suitable constant $l>0$. Unfortunately, his analysis does not directly extend to the $m$-intersecting families $\Gamma$ in which some pairs of the curves are disjoint.

		\section{Proof of Theorem \ref{Thm:Main1}}\label{Sec:Main1}
		
	Here is an informal overview of our proof of Theorem \ref{Thm:Main1}. 
The crux of Salazar's analysis for pairwise intersecting families $\Gamma$ of Jordan curves is that any sufficiently dense contact graph $\C(\Gamma)$ must contain a large bi-clique $K_{2m,l}$, for some $l=\Omega(m^3)$. In such a case, we can show a contradiction by observing that some pair of curves in the $m$-intersecting family $\Gamma$ have to intersect more than $m$ times. However, this geometric argument fails if some pairs of curves are disjoint.

Fortunately, the analysis of Salazar applies if there exist disjoint families $A,B\subset \Gamma$ so that all of the curves in $A$ (resp., $B$) touch the same curve $\gamma_1\in \Gamma$ (resp., $\gamma_2\in \Gamma$), and sufficiently many of the touchings involve a curve of $A$ and a curve of $B$.
The following lemma shows that the desired pair $\gamma_1,\gamma_2\in \Gamma$ always exists (see Figure 1).
	
\begin{lemma}\label{Lemma:GroundCurves}
For every $0<C_{lem}$, and every integer  $m>0$, there is an integer $n_0\geq 2$ with the following property.

Let $\Gamma$ be an $m$-intersecting family of $n\geq n_0$ Jordan arcs in general position and let $\T$ denote the set of all the touching points in $\Gamma$. If $|\T|\geq C_{lem}n^{2-\frac{1}{3m+15}}$, then there exists a pair of arcs $\gamma_1, \gamma_2 \in \Gamma$, a cell $\Delta \subseteq \reals^2\setminus (\gamma_1 \cup \gamma_2)$, and disjoint subsets $A,B\subset \Gamma$,  with the following properties:\footnote{The constants of proportionality within the $O\left(\cdot\right)$-notation may depend on the choice of $C_{lem}$ and $m$.}

(1) every arc in $A$(resp., $B$) touches $\gamma_1$ (resp., $\gamma_2$), and 

(2) $\Omega \left(\frac{|\T|}{n^{\frac{2}{3m+15}}}\right)$ of the touching points in $\T$ lie in $ \Delta$ and involve an arc of $A$ and an arc of $B$.
	\end{lemma}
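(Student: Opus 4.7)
The plan is to produce the pair $\gamma_1,\gamma_2$ by a probabilistic averaging argument on the contact graph $\C(\Gamma)$, then to localize most of the resulting touchings inside a single cell by pigeonhole on $\A(\{\gamma_1,\gamma_2\})$, and finally to enforce the disjointness of $A,B$ by a random bipartition of the common neighbors.

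First I would pass from $\Gamma$ to a subfamily $\Gamma'\subseteq\Gamma$ by the standard degeneracy pruning: iteratively discard every arc whose contact degree inside the current subfamily is smaller than $|\T|/(2n)$. Each deletion kills at most $|\T|/(2n)$ touchings, so overall at most $|\T|/2$ touchings are lost; hence $\Gamma'$ supports $\Omega(|\T|)$ touchings while every arc in $\Gamma'$ touches $\Omega(|\T|/n)=\Omega(n^{1-\frac{1}{3m+15}})$ other arcs inside $\Gamma'$. Next, draw $\gamma_1,\gamma_2\in\Gamma'$ independently and uniformly at random. For each touching $(\alpha,\beta)$ with $\alpha,\beta\in\Gamma'$, independence yields $\Pr[\alpha\in N(\gamma_1)\text{ and }\beta\in N(\gamma_2)]=d_{\Gamma'}(\alpha)d_{\Gamma'}(\beta)/|\Gamma'|^{2}=\Omega((|\T|/n)^{2}/n^{2})$, so summing over the $\Omega(|\T|)$ touchings of $\Gamma'$ gives expected value $\Omega(|\T|^{3}/n^{4})$. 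Substituting the hypothesis $|\T|\geq C_{lem}n^{2-\frac{1}{3m+15}}$ converts this expression into $\Omega(|\T|/n^{\frac{2}{3m+15}})$, and averaging produces specific $\gamma_1,\gamma_2\in\Gamma'$ achieving this bound.

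Having fixed $\gamma_1,\gamma_2$, I would now localize the counted touchings to a single cell of $\A(\{\gamma_1,\gamma_2\})$. Since $\gamma_1$ and $\gamma_2$ meet in at most $m$ points, the union $\gamma_1\cup\gamma_2$ decomposes $\reals^{2}$ into $O(m)$ cells. Discard first the at most $O(n)$ counted touchings that involve $\gamma_1$ or $\gamma_2$ themselves; this is negligible since $|\T|^{3}/n^{4}$ dominates $n$ for $n\geq n_0$ large enough. By general position, each remaining touching point lies in exactly one cell of $\A(\{\gamma_1,\gamma_2\})$, so pigeonhole produces a cell $\Delta$ containing $\Omega(|\T|/(m\cdot n^{\frac{2}{3m+15}}))$ of the counted touchings.

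Finally, to force the disjointness of $A\subseteq N(\gamma_1)$ and $B\subseteq N(\gamma_2)$, I would randomly partition the common-neighbor set $N(\gamma_1)\cap N(\gamma_2)$ into $P_1\sqcup P_2$ using independent fair coin flips, and set $A=(N(\gamma_1)\setminus N(\gamma_2))\cup P_1$ and $B=(N(\gamma_2)\setminus N(\gamma_1))\cup P_2$; a direct check gives $A\cap B=\emptyset$ while $A\subseteq N(\gamma_1)$ and $B\subseteq N(\gamma_2)$. A short case analysis (splitting on whether each of $\alpha,\beta$ lies in $P=N(\gamma_1)\cap N(\gamma_2)$ or in the setminus part) shows that each counted touching in $\Delta$ is $(A,B)$-good with probability at least $1/2$, so a suitable deterministic bipartition retains at least half, which yields the bound $\Omega(|\T|/n^{\frac{2}{3m+15}})$ claimed in the lemma once the $m$-dependent losses are absorbed into the constant of proportionality. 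The main obstacle is precisely the disjointness clause -- touchings whose two arcs are both common neighbors of $\gamma_1,\gamma_2$ are the only source of difficulty, and the random bipartition is the cleanest device to handle them with only a constant-factor loss. The exponent $\frac{1}{3m+15}$ itself arises from the algebraic balance of $|\T|^{3}/n^{4}$ against $|\T|/n^{\frac{2}{3m+15}}$, which is exactly the threshold at which the sampling argument breaks even.
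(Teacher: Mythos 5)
Your proof is correct and follows essentially the same probabilistic strategy as the paper: a random choice of $\gamma_1,\gamma_2$ to anchor the touching structure, pigeonhole over the $O(m)$ cells of the arrangement of $\gamma_1\cup\gamma_2$, and a random bipartition of $N(\gamma_1)\cap N(\gamma_2)$ to enforce disjointness of $A$ and $B$. The only cosmetic difference is that you enforce a minimum contact degree by degeneracy pruning to a subfamily $\Gamma'$ before sampling, whereas the paper achieves the same effect by discarding \emph{poor} arcs (those lying on fewer than $|\T|/(1000n)$ touchings) and counting only the \emph{rich} touchings; both devices deliver the same lower bound on the probability that a fixed touching is captured.
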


	\begin{figure}[htbp]
\begin{center}
\includegraphics[scale=0.4]{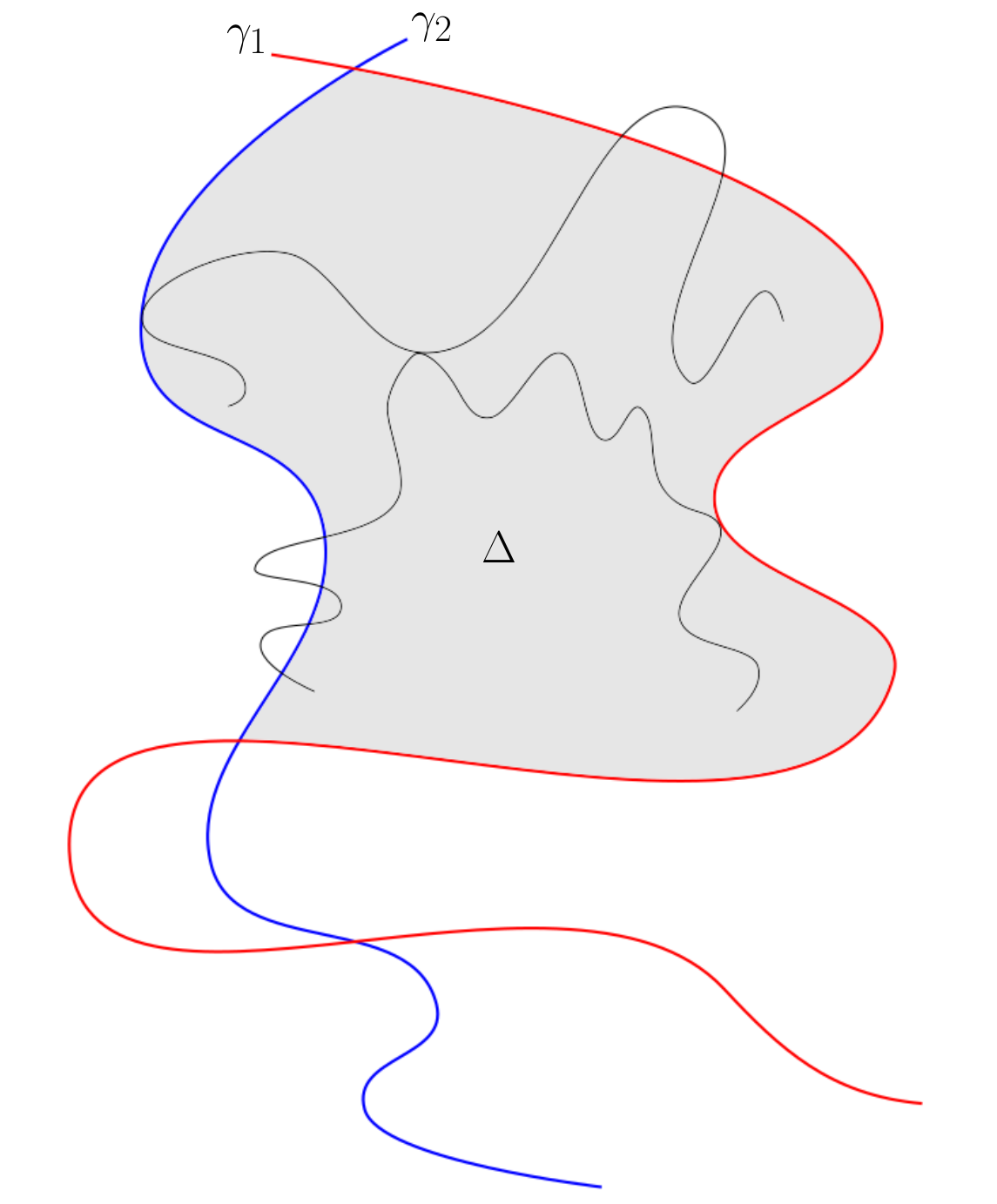}
\caption{\small Lemma \ref{Lemma:GroundCurves}. We seek a pair of arcs $\gamma_1,\gamma_2\in \Gamma$, a cell $\Delta \subseteq R\setminus (\gamma_1 \cup \gamma_2)$, and disjoint subfamilies $A,B\subset \Gamma$, so that every arc in $A$ (resp., $B$) touches $\gamma_1$ (resp., $\gamma_2$), and $\Omega \left(\frac{|\T|}{n^{\frac{2}{3m+15}}}\right)$ of the touchings amongst $\Gamma$ involve an arc in $A$ and another arc in $B$.}
\label{Fig:Grounds}
\end{center}
\end{figure}

	\begin{proof} 
Let $\Gamma$ be a family of $n$ Jordan arcs in general position whose induced set of touching points $\T$ satisfies $|\T|\geq C_{lem}n^{2-\frac{1}{3m+15}}$.
We can assume that $n\geq n_0$, where the choice of $n_0$ is detailed in the sequel.

Consider an arbitrary labeling $\gamma_1,\ldots,\gamma_n$ of the curves in $\Gamma$.
We select uniformly and at random a pair of distinct arcs in $\Gamma$ (so that each pair is chosen with uniform probability $1/{n\choose 2}$). Assume with no loss of generality that we have chosen $\gamma_1$ and $\gamma_2$. Let $A'$ (resp., $B')$ be the set of all the arcs in $\Gamma\setminus \{\gamma_1,\gamma_2\}$ that are touching $\gamma_1$ (resp., $\gamma_2)$.  
We first set $X:= A'\cap B'$ and assign every arc in $A'\setminus B'$ and $B'\setminus A'$ to the respective set $A$ or $B$.  We then assign every arc in $X$ exclusively, independently and at random, with probability $1/2$, to either of the sets $A$, $B$. Clearly, this yields disjoint subfamilies $A,B\subset \Gamma$ so that $A\uplus B=A'\cup B'$ and every arc in $A$ (resp. $B$) touches $\gamma_1$ (resp. $\gamma_2$). 

Let $\T^*$ be the set of all the touchings that involve an arc in $A$ and another arc in $B$, and let $\Delta$ be the open cell of $\reals^2\setminus (\gamma_1\cup \gamma_2)$ that contains the largest number of touchings of $\T^*$ (if $\Delta$ is not unique, we select any such cell). Since $\reals^2\setminus(\gamma_1\cup \gamma_2)$ has at most $m+2$ cells, the restricted set of touchings $\T^*\cap \Delta$ has cardinality at least $|\T^*|/(m+2)$. 

It suffices to show that the expectation of $|\T^*|$  is at least $\Omega \left(\frac{|\T|}{n^{\frac{2}{3m+15}}}\right)$.
Indeed, let $\T'$ be the set of all the touchings that involve an arc $\gamma_i\in A'$ and another arc $\gamma_j\in B'$. Clearly, we have $\T'\supset \T^*$ and every touching of $\T'$ is included in $\T^*$ with probability at least $1/2$. Therefore, we have $\E[|\T^*|]\geq \frac{\E[|\T'|]}{2}$.

To establish a lower bound for $\E[|\T'|]$ in terms of $|\T|$, we say that an arc $\gamma \in \Gamma$ is $\mathit{ poor}$ if it contains fewer than $\frac{|\T|}{1000n}$ touchings of $\T$. Let $\T_{poor}\subseteq \T$ denote the set of all the touching points that lie on at least one poor arc. It follows that $|\T_{poor}|\leq \frac{|\T|}{1000}$. We thus denote $\Gamma_{rich}:=\Gamma\setminus \Gamma_{poor}$ and say that a touching point $t\in \T$ is {\it rich} if it does not belong to $\T_{poor}$; that is, both of its adjacent arcs belong to $\Gamma_{rich}$. Note that the subset $\T_{rich}$ of all such rich touchings within $\T$ has cardinality at least $\frac{999}{1000}|\T|$.

Fix a rich touching point $t\in \T_{rich}$. Let $\gamma_i,\gamma_j\in \Gamma_{rich}$ be the arcs that are adjacent to $t$. Clearly, $t$ is included in $\T'$ if and only if (1) $i,j\geq 3$, and (2) the arc $\gamma_i$ touches $\gamma_1$ and the arc $\gamma_j$ touches $\gamma_2$ (or vice versa). 
Since $t$ is rich, each of $\gamma_i,\gamma_j$ belongs to $\Gamma_{rich}$ and, therefore, touches at least $C_{lem} n^{1-\frac{1}{3m+15}}$ arcs of $\Gamma\setminus \{\gamma_i,\gamma_j\}$. Thus, there exist at least $(C^2_{lem}/2)n^{2-\frac{2}{3m+15}}-3n/2$ favourable pairs $\{\gamma_1,\gamma_2\}$ (whose selection secures $t$ in $\T'$). A sufficiently large choice of $n_0$ and $n\geq n_0$ guarantees that $t$ is included in $\T'$ with probability at least 

$$
 \frac{C^2_{lem}n^{2-\frac{2}{3m+15}}-3n}{2{n\choose 2}}=\Omega\left(\frac{1}{n^{\frac{2}{3m+15}}}\right).
 $$

Using the linearity of expectation, we obtain

$$
\E[|\T'|]=\Omega\left(\frac{|\T|}{n^{\frac{2}{3m+15}}}\right),
$$

\noindent and readily conclude that 

$$
E\left(|\T^*\cap \Delta|\right)\geq \frac{\E\left(|\T^*|\right)}{m+2}\geq \frac{\E\left(|\T'|\right)}{2(m+2)}=\Omega\left(\frac{|\T|}{2(m+2)n^{\frac{2}{3m+15}}}\right)=
$$
$$
=\Omega\left(\frac{|\T|}{n^{\frac{2}{3m+15}}}\right).
$$

This completes the proof Lemma \ref{Lemma:GroundCurves}.
\end{proof}

\noindent{\bf Back to the proof of Theorem \ref{Thm:Main1}.} 
We choose a suitably small, albeit fixed, $0<C_{lem}$, and a suitably large $n_0$ as in Lemma \ref{Lemma:GroundCurves}.
Let $\T$ be the set of all the touchings that are attained by $\Gamma$. Denote $T=|\T|$. We may assume that $T\geq C_{lem} n^{2-\frac{1}{3m+15}}$, and $n\geq n_0$, or else the theorem immediately follows.
Lemma \ref{Lemma:GroundCurves} yields a pair of arcs $\gamma_1, \gamma_2\in \Gamma$, a cell $\Delta\subset \reals^2\setminus (\gamma_1\cup \gamma_2)$ in the arrangement of $\gamma_1$ and $\gamma_2$, and disjoint subsets $A,B\subset \Gamma$ of arcs that touch, respectively, $\gamma_1$ and $\gamma_2$ and determine $\Omega \left(\frac{T}{n^{\frac{2}{3m+15}}}\right)$ touchings within $\Delta$. 

We now split each arc in $A\cup B$ into at most $m+2$ sub-arcs by cutting it at the at most $m+1$ intersection points with $\gamma_1\cup\gamma_2$.\footnote{Recall that every arc in $A$ (resp., $B$) touches $\gamma_1$ (resp., $\gamma_2$).} Let $\Lambda_{\Delta}$ denote the collection of all these subarcs that lie within $\Delta$. Notice that the endpoints of every arc of $\Lambda_{\Delta}$ lie on the boundary of $\Delta$, with the possible exception of two of them; see Figure \ref{Fig:CutArcs}.

\begin{figure}[htbp]
\begin{center}
\includegraphics[scale=0.5]{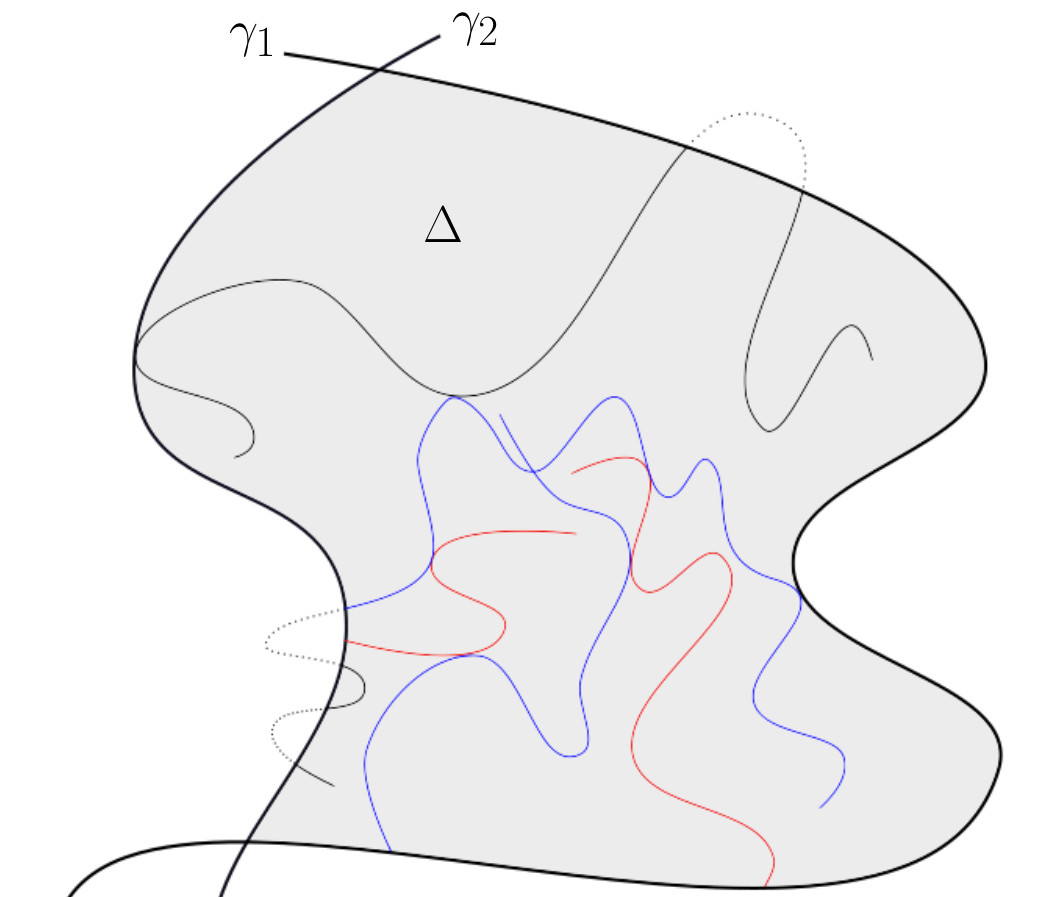}
\caption{\small  The cell $\Delta$ (shaded) with the families of subarcs $\Lambda_1$ (red) and $\Lambda_2$ (blue).}
\label{Fig:CutArcs}
\end{center}
\end{figure}

\begin{lemma}\label{Lemma:NoBiClique}
		With the above definitions, there is a constant $l=l(m) \geq 1$  (which depends only on $m$) so that the contact graph $\C(\Lambda_\Delta)$ of $\Lambda_\Delta$ cannot contain $K_{m+5,l(m)}$. 
		\end{lemma}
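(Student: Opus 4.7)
The plan is to follow the high-level strategy of Salazar's bi-clique argument, adapted to our setting of sub-arcs lying inside the cell $\Delta$. Assume toward contradiction that a copy of $K_{m+5,l}$ embeds in $\C(\Lambda_\Delta)$ with parts $X=\{x_1,\ldots,x_{m+5}\}$ and $Y$ of size $l=l(m)$, where $l(m)$ will be fixed at the very end, large enough to absorb several pigeonhole losses. The goal is to exhibit a pair of original arcs of $\Gamma$ that must meet in more than $m$ points, contradicting the $m$-intersecting hypothesis.

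First, I would lift every touching in $\C(\Lambda_\Delta)$ back to the parent curves. Each subarc in $\Lambda_\Delta$ is a piece of an original arc in $A \cup B$, obtained by cutting it at the at most $m+1$ points where it meets $\gamma_1 \cup \gamma_2$; thus each parent contributes at most $m+2$ subarcs to $\Lambda_\Delta$. Since any touching in $\C(\Lambda_\Delta)$ is in particular an intersection between the corresponding parent arcs, the $m$-intersecting property implies that a single parent can contribute at most $m$ subarcs to $X$ all of which touch a common $y \in Y$; in particular the $m+5$ subarcs of $X$ come from at least $\lceil (m+5)/m \rceil \geq 2$ distinct parents, and a further pigeonhole secures at least $\lceil (m+5)/2 \rceil$ subarcs in $X$ whose parents lie in a common family, say $A$. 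The fact that all these parents share the distinguished touching point with $\gamma_1$ will play a role in the geometric step below.

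Next, I would reduce $Y$ to a sub-collection in canonical form via pigeonhole. For each $y \in Y$, record its combinatorial type: the cyclic order in which $y$ visits the touching points $y \cap x_1, \ldots, y \cap x_{m+5}$ along $y$, the local side of $x_i$ on which $y$ lies at each tangency, and the family ($A$ or $B$) of the parent of $y$. The total number of types is at most some $N(m) = O((m+5)! \cdot 2^{m+6})$, so a choice $l(m) \gg N(m)$ isolates a homogeneous sub-collection $Y' \subset Y$ of size at least $l(m)/N(m)$, whose members all share the same type.

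The main obstacle is the final geometric step. View $\Delta$ as a simply connected region whose boundary alternates between arcs of $\gamma_1$ and $\gamma_2$; the $m+5$ chords in $X$ then partition $\Delta$ into a cell decomposition of combinatorial complexity bounded in $m$, and every subarc in $Y'$ is tangent to $X$ in the same canonical pattern. Following Salazar's line of reasoning, once $|Y'|$ exceeds a bound depending only on $m$, this homogeneity pins the members of $Y'$ into a narrow topological corridor inside $\Delta$, forcing some pair among (i) two elements of $Y'$ sharing a common parent, or (ii) a parent of an $X$-subarc together with a parent of a $Y'$-subarc, to accumulate intersection points in excess of $m$, via the prescribed tangencies together with the additional transverse crossings compelled by the topology of the corridor. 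Making this step rigorous is the technical crux of the proof; it requires carefully accounting for crossings among the $X$-subarcs and for additional transverse crossings between subarcs of $X$ and subarcs of $Y$ beyond the prescribed tangencies, and the value $m+5$ is calibrated so that, given the slack produced in Steps 1 and 2, the corresponding Jordan-curve and Euler-formula counting just barely forces the desired violation of the $m$-intersecting hypothesis.
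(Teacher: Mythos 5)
The proposal is not a proof: it stops short at precisely the point where a proof is required. After the pigeonhole reductions, you write that the homogeneity ``pins the members of $Y'$ into a narrow topological corridor \ldots forcing some pair \ldots to accumulate intersection points in excess of $m$,'' and then concede that ``making this step rigorous is the technical crux of the proof.'' That single sentence is the entire content of the lemma; a plan that defers it has not established anything. Everything before it is uncontroversial bookkeeping (and some of it, such as the lift to parent arcs and the $A$/$B$ pigeonhole in Step~1, is not actually used by any later step of your plan and is not needed for the result, since the argument only requires that subarcs of $\Lambda_\Delta$ inherit the $m$-intersecting property from $\Gamma$).

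The paper closes this gap with a concrete two-move argument that you should compare against your sketch. First, instead of typing the arcs of $Y$ by their touching order \emph{along themselves}, the paper passes to the arrangement $\A_\Delta$ of $\Lambda_1\cup\{\gamma_1,\gamma_2\}$ and observes that every $y\in \Lambda_2$ lies in the closure of a \emph{single} face $F$ (it only touches the arcs of $\Lambda_1$, so it cannot cross into another face). Within a fixed face $F$, every $y$ meets the $m+5$ arcs of $\Lambda_1$ at $m+5$ distinct boundary edges of $F$, which determines a circular subsequence $\sigma(y)$ of the cyclic edge list of $\partial F$. This choice of a \emph{shared} reference circle $\partial F$ is what your Step~2 lacks: ordering the tangencies along $y$ itself gives no common frame in which to compare two distinct members of $Y'$. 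Second, the paper proves (Proposition~\ref{Prop:CircularUnique}) that two arcs of $\Lambda_2$ inside the same face cannot have the same $\sigma$: one closes each of the two arcs to a Jordan curve inside $F$, and a crossing-parity count along $\partial F$ (alt-edges versus hat-edges) forces at least $m+5$ crossings between the two closed curves, of which at most four can lie on the imaginary completions, hence at least $m+1$ crossings between the original arcs --- contradicting $m$-intersection. That is the rigorous version of your ``narrow corridor'' heuristic. With it, $|\Lambda_2\cap \overline F|$ is at most the number of circular subsequences of length $m+5$ of $\partial F$, and summing over the $O(m^3)$ faces yields $l(m)=m^{O(m)}$.

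In short: your outline correctly guesses that a type-pigeonhole followed by a topological forcing argument is the right shape, but the forcing argument is the theorem, and you have not supplied it; nor have you introduced the face-of-the-arrangement device that makes the forcing argument provable.
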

	
		\begin{proof}
	Let $K_{m+5,l}$ be a bipartite sub-graph in $\C(\Lambda_{\Delta})$.
	Namely, there exist disjoint subsets $\Lambda_1,\Lambda_2\subset \Lambda_\Delta$ with $|\Lambda_1|=m+5$ and $|\Lambda_2|=l$ so that every arc of $\Lambda_1$ touches every arc of $\Lambda_2$. In what follows we show that $l$ is bounded by some constant $l(m) > 0$. 
To that end, we examine the arrangement  $\A_\Delta$ of $\Lambda_1\cup \{\gamma_1,\gamma_2\}$. Clearly, every arc in $\Lambda_2$ is fully contained in the closure of a single face of this arrangement. 
Since the family $\Gamma$ is $m$-intersecting, this is also true for $\Lambda_{\Delta}$, so the arrangement $\A_\Delta$ has at most $O\left(m^3\right)$ faces, and the boundary of each of these faces consists of $O(m^3)$ edges. Therefore, it is enough to bound the number of arcs in $\Lambda_2$ that are contained in (the closure of) any given face $F$ in $\A_\Delta$.

Indeed, let $F\subset \Delta$ be such a cell whose closure contains $z>0$ arcs of $\Lambda_2$, and denote the subset of these arcs by $\Lambda_\Delta$. Since every arc of $\Lambda_1$ meets the boundary of $\Delta$, the face $F$ is simply connected. (In other words, its boundary consists of a single connected component.)

\begin{figure}[htbp]
\begin{center}
\includegraphics[scale=0.8]{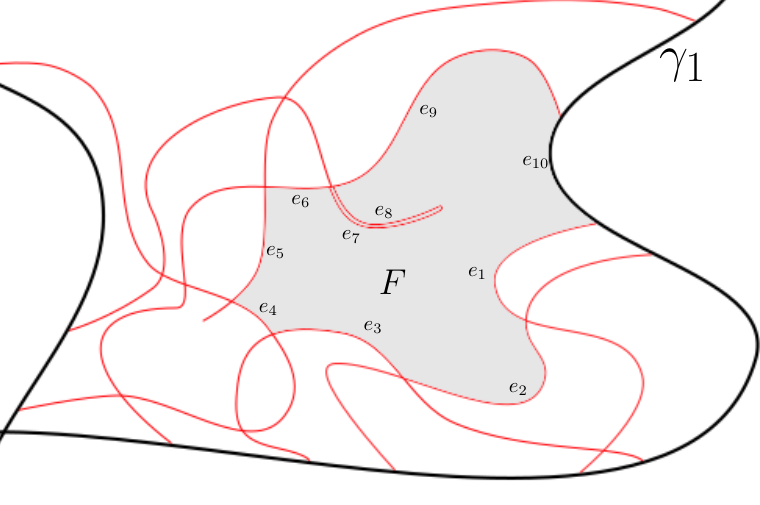}
\caption{\small  Proof of Lemma \ref{Lemma:NoBiClique}. The simply connected cell $F\subset \Delta$ in the arrangement of $\Lambda_1\cup\{\gamma_1,\gamma_2\}$. The edges $e_1,\ldots, e_{10}$ appear in this order along $\partial F$. (Notice that the edges $e_7$ and $e_8$ are two copies of the same edge. However, any given arc of $\Lambda_F$ can touch only one of them.)}
\label{Fig:CutArcs}
\end{center}
\end{figure}

We trace the boundary of $F$ so that its interior remains to the right of us. Notice that every edge of $\partial F$ is contained in an arc of $\Lambda_1\cup\{\gamma_1,\gamma_2\}$, and some of them may be encountered twice, from both sides. (This can happen only if their respective arcs have an endpoint in the interior of $\Delta$.) In this case we use distinct labels for the two sides of the edge.
We thus list the edges $\left<e_1,\ldots,e_h,e_1\ldots\right>$ of $\partial F$ in the order of their appearance during this walk.
The crucial observation is that every arc $\lambda\in \Lambda_F$ touches every arc of $\Lambda_1$ at a unique boundary edge of $F$. Hence, each $\lambda\in \Lambda_F$ determines a unique sub-sequence $\sigma(\lambda)=\left<e_{j_1}(\lambda),e_{j_2}(\lambda),\ldots,e_{j_{m+5}}(\lambda)\right>$ of the edges $\left<e_1,\ldots,e_h\right>$ at which it touches the arcs of $\Lambda_1$.\footnote{In the sequel, we treat $\sigma(\lambda)$ as a circular sequence even if it has endpoints.} See Figure \ref{Fig:Sequence}.

\begin{figure}[htbp]
\begin{center}
\includegraphics[scale=0.8]{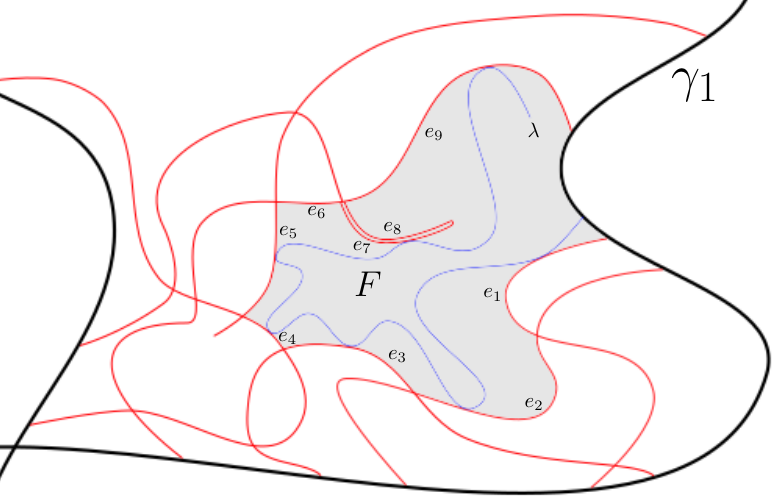}
\caption{\small Proof of Lemma \ref{Lemma:NoBiClique}. A cell $F\subset \Delta$ in the arrangement of $\Lambda_1\cup\{\gamma_1,\gamma_2\}$, for $|\Lambda_1|=m+5=7$, and an arc $\lambda\in \Lambda_F$. Notice that the arc $\lambda\in \Lambda_F$ determines the circular sub-sequence $\sigma(\lambda)=\left<e_1,e_2,e_3,e_4,e_5,e_7,e_9, e_1,\ldots\right>$.}
\label{Fig:Sequence}
\end{center}
\end{figure}

To complete the proof of Lemma \ref{Lemma:NoBiClique} we need the following property; Its somewhat weaker analogue was implicitly established by Salazar  for families of {\it closed} Jordan curves.

\begin{proposition}\label{Prop:CircularUnique}
Every arc $\lambda\in \Lambda_F$ has a unique circular sequence $\sigma(\lambda)$.
\end{proposition}

Though the proof of Proposition \ref{Prop:CircularUnique} overly follows the argument of Salazar \cite{Sa99}, the analysis must be adapted to the more general case of Jordan arcs.
We postpone the proof of Proposition \ref{Prop:CircularUnique} in the end of this section. 
\footnote{It is easy to check that the hypothesis that $|\Lambda_1|$ is $m+5$ (or larger) is essential for the correctness of Proposition 9.}

\medskip
By Proposition \ref{Prop:CircularUnique}, we have $z=|\Lambda_F|\leq m^{O(m)}$. Since the number of faces $F$ in $\A_\Delta$ is $O(m^3)$, we also have $l=|\Lambda_2|\leq m^{O(m)}$. This completes the proof of Lemma \ref{Lemma:NoBiClique}.
\end{proof}
	\bigskip

		\paragraph{Proof of Theorem \ref{Thm:Main1} -- wrap-up.}

By Lemma \ref{Lemma:GroundCurves}, the contact graph $\C(\Lambda)$ has $\Omega\left(\frac{T}{n^{\frac{2}{3m+15}}}\right)$ edges. On the other hand, combining Lemma \ref{Lemma:NoBiClique} with Theorem \ref{KST} yields an upper bound of $O\left(n^{2-\frac{1}{m+5}}\right)$ on the number of these edges. We thus obtain

$$
\frac{T}{n^{\frac{2}{3m+15}}}=O\left(n^{2-\frac{1}{m+5}}\right)
$$

\noindent or

$$
T = O\left(n^{2-\frac{1}{m+5} + \frac{2}{3m+15}}\right) =  O\left(n^{2-\frac{1}{3m+15}}\right).
$$

This concludes the proof of Theorem \ref{Thm:Main1}. $\Box$

\begin{figure}[htbp]
\begin{center}
\includegraphics[scale=0.7]{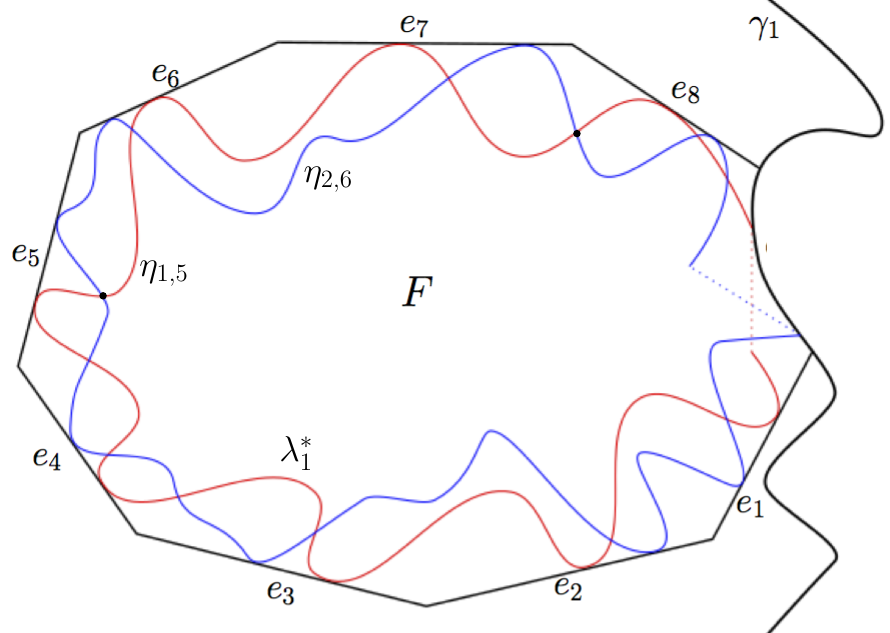}
\caption{\small Proof of Proposition \ref{Prop:CircularUnique}. The case $m=3$. A pair of arcs $\lambda_1$ (red) and $\lambda_2$ (blue) in $\Lambda_F$ that determine the same circular sequence $\sigma(\lambda_1)=\sigma(\lambda_2)=\left<e_1,e_2\ldots,e_8,e_1\ldots\right>$. We augment each arc to a closed Jordan curve, resp., $\lambda^*_1$ or $\lambda^*_2$, which lies within $F$. The edges $e_3,e_4,e_7,e_8,e_9$ are alt-edges, and the edges $e_1,e_2,e_5,e_6$ are hat-edges. The curves $\lambda^*_1$ and $\lambda^*_2$ intersect at least $8$ times, so the arcs $\lambda_1$ and $\lambda_2$ intersect at least $4$ times (contrary to the choice of $m$).}
\label{Fig:AltHat}
\end{center}
\end{figure}
				
\paragraph{Proof of Proposition \ref{Prop:CircularUnique}.}
	Assume by contradiction that there exists a pair of arcs $\lambda_1,\lambda_2 \in \Lambda_F$ that touch the edges of $\partial F$ in the same circular sequence 

$$
\sigma(\lambda_1)=\sigma(\lambda_2)=\left<e_{j_1}(\lambda_1),e_{j_2}(\lambda_1) ,\ldots,e_{j_{m+5}}(\lambda_1),e_{j_1}(\lambda_1) \right>. 
$$ 
For simplicity of notation, assume that $j_1 = 1, j_2 = 2, \ldots, j_{m+5} = m+5$, so $\sigma(\lambda)=\left<e_1,e_2,\ldots,e_{m+5}\right>$. We show that $\lambda_1$ and $\lambda_2$ must intersect at least $m+1$ times, which is contrary to our assumption that the original family $\Gamma$ (and, therefore, also $\Lambda_2$ and $\Lambda_F$) is $m$-intersecting.

To facilitate our analysis, we augment each arc $\lambda_i$, for $i\in \{1,2\}$, that is not already a closed Jordan curve, with an imaginary (open) arc which connects the endpoints of $\lambda_i$, lies entirely within $F$, and does not intersect the original arc $\lambda_i$. With some care, we can guarantee that the closed curves $\lambda_1^*$ and $\lambda_2^*$ do notiolate general position; that is, their imaginary portions intersect at finitely many points which do not coincide with the actual intersection points of the arcs of $\Lambda_F$.
Let $\lambda_i^*$ denote the resulting closed Jordan curve within $F$. Note that  each of the arcs $\lambda_i$, for $ i\in \{1,2\}$, meets every edge $e_k$ for $1\leq k \leq m+5$ at the unique point $p_{i,k}$.
Let $\eta_{i,k}$, for $i\in \{1,2\}$ and $1\leq k\leq m+5$, denote the subarc of $\lambda_i^*$ going from $p_{i,k}$ to $p_{i,(k+1)}$. 
Notice that, for each $i\in \{1,2\}$, the imaginary portion of $\lambda^*_i$  (if it exists) contains at most one of the above arcs $\eta_{i,k}$, for $1\leq k\leq m+5$.

We next assign every edge $e_k$, for $1\leq k\leq m+5$, to an intersection point of $\lambda^*_1$ and $\lambda^*_2$, so that no two of these edges are assigned to the same intersection. This will show that the curves $\lambda_1^*$ and $\lambda_2^*$ intersect at least $m+5$ times. 

Indeed, for each $1\leq i\leq m+5$ we find the points $p_{1,k}, p_{2,k}, p_{1,k+1},$ $p_{2,k+1}$  in one of the following orders along $\partial F$, up to switching the labels of $\lambda^*_{1}$ and $\lambda^*_2$: (a) $p_{1,k},p_{2,k}, p_{1,k+1}, p_{2,k+1}$ or (b) $p_{1,k},p_{2,k}, p_{2,k+1}, p_{1,k+1}$.\footnote{As before, the labeling of $e_1,\ldots,e_{m+5}$ is circular, that is, modulo $m+5$.} In the former case, we say that the edge $e_k$ is an {\it alt-edge,} and in the latter case $e_k$ is called a {\it hat-edge} ;  See Figure \ref{Fig:AltHat}.
If $e_k$ is an alt-edge, we assign it to an arbitrary point in the obviously non-empty intersection of $\eta_{1,k}$ and $\eta_{2,k}$. Otherwise, if $e_k$ is a hat-edge we assign $e_{k}$ to an arbitrary point
in the (again, non-empty) intersection $\eta_{1,k}\cap\eta_{2,k-1}$. It is immediate to check that any intersection point of $\lambda^*_1\cap \lambda^*_2$ is charged at most once. Namely, the points of $\eta_{1,k}\cap \eta_{2,k}$, for $1\leq k\leq m+5$, can be charged only by the alt-edge $e_k$, if it exists. In a similar manner, the points of $\eta_{1,k}\cap \eta_{2,k-1}$ and $\eta_{1,k-1}\cap \eta_{2,k}$ can be  
charged only by the hat-edge $e_k$ (again, if such an edge exists).

Notice that there can be at most $4$ indices $1\leq k\leq m+5$ for which the edge $e_k$ is assigned to an imaginary intersection point, outside the actual intersection $\lambda_1\cap \lambda_2$. This is because we can charge at most 2 intersections along each sub-arc $\eta_{i,k}$, for $i\in \{1,2\}$ and $1\leq k\leq m+5$, and each curve $\lambda^*_i$, for $i\in \{1,2\}$, can contain at most one sub-arc $\eta_{i,k}$ that does not fully lie within $\lambda_i$.

To conclude, we have shown that the Jordan curves $\lambda^*_1$ and $\lambda^*_2$ intersect at least $m+5$ times, and at least $m+1$ of these intersections belong to $\lambda_1\cap \lambda_2$. This contradiction to the $m$-intersecting property of $\Gamma$ proves Proposition \ref{Prop:CircularUnique}. $\Box$

						\section{Proof of Theorem \ref{Thm:Main2}}\label{Sec:Main2}
In this section we use Theorem \ref{Thm:Main1} to establish a more sensitive Crossing Lemma for Jordan arcs. To this end, we loosely follow the separator argument of Pach, Rubin and Tardos \cite[Section 3]{PRT18} with the main difference that we can now plug Theorem \ref{Thm:Main1} instead of a much weaker bound, which was used in \cite{PRT18} for families of Jordan curves with sufficiently dense intersection graphs. For the sake of completeness, we spell out all the technical details.

As before, let $\T$ denote the set of all the touching points that occur in a given $m$-intersecting family $\Gamma$ of $n$ Jordan arcs in general position and denote $T:=|\T|$. Let $\X$ denote the entire set of intersection points among the curves of $\Gamma$, and denote $X:=|\X|$. Recall that $\T\subseteq \X$, so we always have $T\leq X$. We also assume that $T\geq n$.

Let $d =\lfloor\frac{X}{n}\rfloor$. Consider the string graph $\I(\Gamma)$ determined by $\Gamma$, and note that the average degree of an arc in $\I(\Gamma)$ is proportional to $d$. 
We first reduce the degree of each curve in $\Gamma$ to at most $d$. To this end, we break each curve $\gamma \in \Gamma$ into sub-curves so that  all of them, have degree exactly $d$, with the possible exception of a single curve of degree at most $d$. We obtain a set of curves $\Gamma'$ with size $n' =\Theta(n)$, and notice that the sets $\T$ and $\X$ of, respectively, intersection points and touching points, remain unchanged.

 We repeatedly apply Theorem \ref{Thm:FoxPach} to the family $\Gamma'$ (or, more precisely, the intersection graph $\I\left(\Gamma')\right)$ until we obtain subsets of size less than the threshold $M:= C \frac{n^2d^3}{T^2}$ for a certain constant $C>1$. (Note that $M\gg d$ by Theorem \ref{Thm:Main1}, and it approaches $d$ if the ratio $X/T$ is small.) Consider all the different subsets that arise in the intermediate steps of the partition process. For any such subset of cardinality $k$, Theorem \ref{Thm:FoxPach} yields a separator of size $O\left(\sqrt{kd}\right)
$ which we add to the final separator set $S\subset \Gamma'$. Clearly, the subsets whose respective cardinalities $k$ belong the interval $ \left(\frac{3}{2}\right)^{i}M \leq k < \left(\frac{3}{2}\right)^{i+1}M$, for a fixed integer $0\leq i\leq \log_{\frac{3}{2}}n$, must be disjoint, so their number is at most $ \left(\frac{2}{3} \right)^i\frac{n}{M}$, and each of them contributes $O\left(\sqrt{\left(\frac{3}{2}\right)^i Md}\right)$ arcs to $S$. Thus, the overall contribution of such subsets to the size of $S$ is $O\left( \frac{n\sqrt{d}}{\sqrt{M}} \sqrt{\left(\frac{2}{3}\right)^i}\right)$. As we sum for all $1\leq i \leq \log_{\frac{3}{2}}n$, we obtain that $|S| = O\left(n\sqrt{d/M}\right)=O\left(T/d\right)$.
Note that the separator arcs of $S$ contain at most $|S|d=O(T)$ points of $\X$, and this quantity can be reduced to less than $T/2$ by choosing a sufficiently large constant $C$ in the definition of $M$. Therefore, we can assume, from now on, that the remaining curves of $\Gamma'\setminus S$ determine at least $|\T|-T/2\geq T/2$ touching points of $\T$.

						
Our construction easily implies that the terminal subsets of $\Gamma'$, to which we no longer apply Theorem \ref{Thm:FoxPach}, are pairwise disjoint.
Let $\{\Gamma_i| i\in I\}$ be the resulting final partition of $\Gamma' \setminus S$, so that no curve in $\Gamma_i$ intersects a curve in $\Gamma_j, j\neq i$.

It is immediate to check that $|I|\leq 2n'/M=O(n/M)$.
Therefore, the pigeonhole principle implies that there must be a subset $\Gamma_i\subset \Gamma'$ that determines at least $$
\frac{T}{2|I|}=\Omega\left(\frac{M T}{n}\right)=\Omega\left(\frac{nd^3}{T}\right)
$$

\noindent touching points of $\T$. Applying Theorem \ref{Thm:Main1} to this sub-family $\Gamma_i$ readily yields

\begin{equation}\label{Eq:Part}
\frac{nd^3}{T}=O\left(M^{2-\frac{1}{3m+15}}\right) 
\end{equation}

Denote $f:=X/T$, so that $f\geq 1$. Recall that we are to show that $f=\Omega\left(\left(T/n\right)^{\frac{1}{9m+45}}\right)$.
Using the definitions of $M$ and $f$, we can rewrite (\ref{Eq:Part})
as 

$$
\frac{M d}{f}=O\left(M^{2-\frac{1}{3m+15}}\right) 
$$
  
\noindent or

$$
\frac{d}{f}=O\left((f^2d)^{1-\frac{1}{3m+15}}\right). 
$$

Rearranging the last bound, and using the inequality $T\leq X$, we obtain

$$
\left(\frac{T}{n}\right)^{\frac{1}{3m+15}}=O\left(\left(\frac{X}{n}\right)^{\frac{1}{3m+15}}\right)= O\left(d^{\frac{1}{3m+15}}\right)=O(f^{3-\frac{2}{3m+15}})=O(f^3), 
$$

\noindent which yields the asserted lower bound for $f$. 
This concludes the proof of Theorem \ref{Thm:Main2}. $\Box$

\section{Conclusion and Open Problems}
Specializing to the significant case of $m$-intersecting Jordan curves, our Theorems \ref{Thm:Main1} and \ref{Sec:Main2} yield a distinct improvement over the bound provided by Theorem \ref{Thm:PRT}. It is achieved by replacing the intricate geometric charging scheme of Pach, Rubin and Tardos \cite{PRT18} with K\H{o}v\'{a}ri-S\'{o}s-Tur\'{a}n Theorem \ref{KST} -- a fairly standard tool from extremal combinatorics. This replacement relies on the $m$-intersecting property, which is essential to our proof of Lemma \ref{Lemma:NoBiClique} in Section \ref{Sec:Main1}.
However, if some pairs of Jordan curves can intersect an unbounded number of times, some families of $n$ such Jordan curves can determine $O(n^2)$ touching points while intersecting at most $O(n^2\log n)$ times\cite{FFPP10}. Hence, the general bound in Theorem \ref{Thm:PRT} cannot be improved beyond $\Omega\left(T \log \left(T/n\right)\right)$. It remains a significant open problem to determine the true growth rate of the smallest number of intersections that can arise in an arbitrary family of $n$ Jordan curves that determine $T$ touching points.

Another important problem, due to J. Pach, calls for the best possible bound in the special case of $1$-intersecting families of Jordan arcs.

\begin{conjecture}[J. Pach, 2017]
Let $\Gamma$ be a $1$-intersecting family of $n$ Jordan arcs in general position. Then $\Gamma$ determines $O(n)$ touching points.
\end{conjecture}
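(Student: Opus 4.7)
The plan is to establish the conjecture by proving that the contact graph $\C(\Gamma)$ of any $1$-intersecting family $\Gamma$ of Jordan arcs in general position is planar; since every planar graph on $n$ vertices has at most $3n-6$ edges, this immediately yields the claimed $O(n)$ bound on the number of touching pairs. This is the natural analogue of the fact (stated at the beginning of Section \ref{Sec:Main1}) that the contact graph of any family of closed Jordan curves is planar; the challenge is to carry the proof over to arcs when the $1$-intersecting hypothesis is the only additional structure available.

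The starting observation is that, under the $1$-intersecting hypothesis, any two arcs $\gamma_i,\gamma_j\in\Gamma$ that touch at a point $p_{ij}$ share only this single point, and being tangent there they do not cross transversally. To realise $\C(\Gamma)$ as a plane graph I would fix, for each arc $\gamma_i$, a reference vertex $v_i$ in its interior avoiding every intersection point, and draw each touching edge $v_iv_j$ by concatenating the sub-arcs of $\gamma_i$ from $v_i$ to $p_{ij}$ and of $\gamma_j$ from $p_{ij}$ to $v_j$, smoothing at $p_{ij}$ along the common tangent direction. This defines a drawing of $\C(\Gamma)$ that may contain crossings; I would then iteratively eliminate them using a swap argument in the spirit of the alt-edge/hat-edge analysis of Proposition \ref{Prop:CircularUnique}, exploiting the fact that every unintended crossing of two drawn edges occurs at a transversal intersection of two underlying arcs, and that each such intersection is unique by the $1$-intersecting property.

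The main obstacle is that the naive swap of tails at a crossing does not in general preserve the underlying contact pairs, so the reduction must instead be carried out \emph{locally} at each tangent point. Concretely, for each arc $\gamma_i$ and each tangent point $p_{ij}\in\gamma_i$ one must decide on which side of $\gamma_i$ the drawn edge $v_iv_j$ leaves through $p_{ij}$, and show that these choices can be made consistently across all touchings so that no two drawn edges cross; this amounts to a Hanani--Tutte-style parity argument on the arrangement $\A(\Gamma)$. I expect the hardest case to be a configuration with many arcs tangent to a common arc interleaved with transversal crossings, where the consistency condition may fail locally and require a global redistribution of tangent directions to repair.

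If the planarity argument proves intractable in full generality, the fall-back plan is to combine Lemma \ref{Lemma:NoBiClique} specialised to $m=1$ with the random sampling of Lemma \ref{Lemma:GroundCurves} and the separator decomposition of Section \ref{Sec:Main2}, and to bootstrap the resulting sub-quadratic bound. Theorem \ref{Thm:Main1} at $m=1$ already gives $O(n^{2-1/18})$ touchings, and iterating the peeling of dense sub-instances could in principle push the exponent closer to $1$; however, closing the gap all the way to $O(n)$ appears to genuinely require the planarity input described above, which is consistent with the conjecture's having so far remained open.
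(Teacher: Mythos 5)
The statement you are trying to prove is not a theorem of the paper; it appears in the Conclusion as an \emph{open conjecture} attributed to J.~Pach (2017), and the paper explicitly offers no proof and no substantive progress beyond the $O(n^{2-1/18})$ bound that follows from Theorem~\ref{Thm:Main1} at $m=1$. So there is no ``paper's own proof'' against which to compare; what you have written is a research plan for an unsolved problem, and you should label it as such rather than as a proof.

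On the merits of the plan itself, two things are worth flagging. First, the planarity route aims at a conclusion strictly stronger than the conjecture: Pach's conjecture only asks for $O(n)$ touchings, which a non-planar contact graph can easily satisfy (as the $m=2$ pseudo-disc result of Agarwal \emph{et al.}~\cite{PseudoCircles} illustrates -- they prove linear touchings, not planarity). The obstruction you foresee is real: once you concatenate sub-arcs of $\gamma_i$ and $\gamma_j$ through $p_{ij}$ with a side-choice at the tangency, two independent drawn edges $v_iv_j$ and $v_kv_l$ cross exactly once for each of the (up to four) transversal intersections among $\{\gamma_i,\gamma_j\}\times\{\gamma_k,\gamma_l\}$, and nothing in the $1$-intersecting hypothesis forces this count to be even. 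A Hanani--Tutte argument therefore does not apply off the shelf, and the ``global redistribution of tangent directions'' you gesture at is precisely the missing idea -- you have not shown it exists, and I see no reason it must. Second, the fall-back plan cannot succeed even in principle: Lemma~\ref{Lemma:NoBiClique} feeds into Theorem~\ref{KST}, which always outputs a bound of the form $O(n^{2-1/s})$ with $s\geq 1$ a fixed integer, and the separator machinery of Section~\ref{Sec:Main2} only trades such a bound for a crossing-lemma statement; no amount of iteration of these two ingredients will push the exponent below $1+\varepsilon$ for a fixed $\varepsilon>0$, let alone to the linear bound conjectured. You correctly sense this at the end, but then the fall-back is not a fall-back at all -- it is a route to a weaker result than what you would need. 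In short: the proposal identifies an interesting possible strategy (planarity via a parity/side-choice argument) but supplies neither the key lemma nor any evidence that it holds, and the alternative path you offer is structurally incapable of reaching $O(n)$.
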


		\section{Acknowledgments}
		This work is based on my master's thesis, under the supervision of Dr. Natan Rubin. It was supported by grant No. 1452/15 from the Israel Science Foundation.

\end{document}